\newtheorem{theorem}{Theorem}
\newtheorem{proposition}{Proposition}
\newtheorem{corollary}{Corollary}
\newtheorem{lemma}{Lemma}
\newtheorem{definition}{Definition}
\newcommand{\tw}{{\mathbf{tw}}}
\newcommand{\pmin}{{\sc Min-CMSO}}
\newcommand{\pmax}{{\sc Max-CMSO}}
\newcommand{\pmm}{{\sc Min/Max-CMSO}}
\newcommand{\cO}{\mathcal{O}}
\newcommand{\h}[1]{\end{document}}
\begin{document}

\title{Bidimensionality and EPTAS}
\date{}
\author{ Fedor V. Fomin\thanks{Department of Informatics, University of Bergen, Norway.
 \newline $~$\hspace{.5cm} 
\texttt{\{fedor.fomin|daniello\}@ii.uib.no}} 
~~~~\addtocounter{footnote}{-1}Daniel Lokshtanov\footnotemark 
~~~~Venkatesh Raman\thanks{The Institute of Mathematical Sciences, Chennai, India.
 \newline $~$\hspace{.5cm}
\texttt{\{vraman|saket\}@imsc.res.in}}~~~~
~~~~\addtocounter{footnote}{-1}Saket Saurabh\footnotemark~~~~
}

\maketitle


\begin{abstract}
\noindent
Bidimensionality theory appears to be a powerful framework for the development of meta-algorithmic techniques. It was introduced by Demaine et al. [{\em J. ACM 2005}\,] as a tool to obtain sub-exponential time parameterized algorithms for 
problems on $H$-minor free graphs. Demaine and Hajiaghayi [{\em SODA 2005}\,] extended the theory to obtain polynomial time approximation schemes (PTASs) for bidimensional problems, and subsequently improved these results to EPTASs. Fomin et. al [{\em SODA 2010}\,] established a third meta-algorithmic direction for bidimensionality theory by relating it to the existence of linear kernels for parameterized problems. In this paper we revisit bidimensionality theory from the perspective of approximation algorithms and redesign the framework for obtaining EPTASs to be more powerful, easier to apply and easier to understand.

Two of the most widely used approaches to obtain PTASs on planar graphs are the Lipton-Tarjan separator based approach [{\em SICOMP 1980}\,], and Baker's approach [{\em J.ACM 
  1994}\,]. Demaine and Hajiaghayi [{\em SODA 2005}\,] strengthened both approaches using bidimensionality and obtained EPTASs for several problems, including {\sc Connected Dominating Set} and {\sc Feedback Vertex Set}. 
 We unify the two strenghtened approaches to combine the best of both worlds. At the heart of our framework is a decomposition lemma which states
that for ``most'' bidimensional problems, there is a polynomial time algorithm which given an $H$-minor-free graph $G$ as input and an $\epsilon > 0$ outputs a vertex set $X$ of size $\epsilon \cdot OPT$ such that the treewidth of $G \setminus X$ is $f(\epsilon)$. Here, $OPT$ is the objective function value of the problem in question and $f$ is a function depending only on $\epsilon$. This allows us to obtain EPTASs on (apex)-minor-free graphs for all problems covered by the previous framework, as well as for a wide range of  packing problems, partial covering problems and problems that are neither closed under taking minors, nor contractions. To the best of our knowledge  for many of these problems including {\sc Cycle Packing}, {\sc Vertex-${\cal H}$-Packing}, {\sc Maximum Leaf Spanning Tree}, and {\sc Partial $r$-Dominating Set}  no EPTASs on planar graphs were previously known.
\end{abstract}


\section{Introduction}\label{sec:intro}
While most interesting graph problems remain {\cal NP} complete even when restricted to planar graphs, the restriction of a problem to planar graphs is usually considerably more tractable algorithmically than the problem on general graphs. Over the last four decades, it has been proved that many graph problems on planar graphs admit subexponential time algorithms~\cite{DornFT08-csr,FominThilJGT05,LiptonT80}, subexponential time parameterized algorithms~\cite{AlberBFKN02, FominT06}, (Efficient) Polynomial Time Approximation Schemes ((E)PTAS)~\cite{Baker94,Grohe:2003kt,Dawar:2006qb,Eppstein00,Gandhi2004,Khanna:1996rq} and linear kernels~\cite{AFN04,H.Bodlaender:2009ng,ChenFKX07}. Amazingly, the emerging theory of Bidimensionality developed by Demaine et al.~\cite{DemaineH08a-bi,Demaine:2008mi,DFHT05} is able to simultaneously explain the tractability of these problems within the paradigms of parameterized algorithms~\cite{DFHT05}, approximation~\cite{DH05} and kernelization~\cite{F.V.Fomin:2010oq}. The theory is built on  cornerstone theorems from Graph Minors Theory of Robertson and Seymour, and allows not only to explain the tractability of many problems, but also to generalize the results from planar graphs and graphs of bounded genus 
to graphs excluding a fixed minor. Roughly speaking, a problem is bidimensional if the solution value for the problem on a $k\times k$-grid is $\Omega(k^2)$, and the contraction or removal of an edge does not increase solution value. Many natural problems are bidimensional, including {\sc Dominating Set}, {\sc Feedback Vertex Set}, {\sc Edge Dominating Set}, {\sc Vertex Cover}, {\sc $r$-Dominating Set}, {\sc Connected Dominating Set}, {\sc Cycle Packing}, {\sc Connected Vertex Cover}, and {\sc  Graph Metric TSP}. 
 

A PTAS is an algorithm which takes an instance $I$ of an optimization problem and a parameter $\epsilon>0$ and, runs in time $n^{\cO(f(1/\epsilon)}$, produces a solution that is within a factor $\epsilon$ of being optimal. A PTAS with running time $f(1/\epsilon)\cdot n^{\cO(1)}$, is called efficient PTAS (EPTAS). Prior to bidimensionality~\cite{DH05}, there were two main approaches to design (E)PTASs on planar graphs. The first one was based on the classical Lipton-Tarjan planar separator theorem \cite{LiptonT79}. The second, more widely used approach was given by Baker~\cite{Baker94}. In the Lipton-Tarjan based approach we split the input $n$-vertex graph into two pieces of approximately equal size using a separator of size $O(\sqrt{n})$. Then we recursively approximate the problem on the two smaller instances and glue the approximate solutions at the separator. This approach was only applicable to problems where the size of the optimal solutions was at least a constant fraction of $n$. 

The main idea in Baker's approach is to decompose the planar graph into overlapping subgraphs of bounded outerplanarity and then solve the problem optimally in each of these subgraphs using dynamic programming. Later Eppstein~\cite{Eppstein00} generalized this approach to work for larger class of graphs, namely apex minor free graphs. Khanna and Motwani~\cite{Khanna:1996rq} used Baker's approach in an attempt to syntactically characterize the complexity class of  problems admitting PTASs, establishing a family of problems on planar graphs to which it applies. The same kind of approach is also used by Dawar et al.~\cite{Dawar:2006qb} to obtain EPTASs for every minimization problem definable in first-order logic on every class of graphs excluding a fixed minor. Baker's approach seemed to be limited to ``local'' graph problems--where one is interested in finding a vertex/edge set satisfying a property that can be checked by looking at constant size neighborhood around each vertex. 

Demaine and Hajiaghayi~\cite{DH05} used bidimensionality theory to strengthen and generalize both approaches. In particular they strengthened the Lipton-Tarjan approach significantly by showing that for a magnitude of problems one can find a separator of size $O(\sqrt{OPT})$ that splits the optimum solution evenly into two pieces. Here $OPT$ is the size of the optimum solution. This allowed them to give EPTASs for several problems on planar graphs and more generally on apex-minor-free graphs or $H$-minor free graphs. Two important problems to which their approach applies are {\sc Feedback vertex Set} and {\sc Connected Dominating Set}. Earlier only a PTAS and no EPTAS for {\sc Feedback Vertex Set} on planar graphs was known~\cite{KleinbergK01}. In addition, they also generalize Baker's approach by allowing more interaction between the overlapping subgraphs.

Comparing the generalized versions of the two approaches, it seems that each has its strengths and weaknesses. In the generalized Lipton-Tarjan approach of Demaine and Hajiaghayi~\cite{DH05} one splits the graph into two pieces recursively. To ensure that the repeated application does not ``increase'' the approximation factor, in each recursive step one needs to carefully reconstruct the solution from the smaller ones. Additionally, to ensure that the separator splits the optimum solution evenly, the framework of Demaine and Hajiaghayi~\cite{DH05} requires a constant factor approximation for the problem in question. On the other hand, their generalization of Baker's approach essentially identifies a set of vertices or edges that interacts in a limited way with the optimum solution, such that the removal of $X$ from the input graph leaves a graph on which the problem can be solved optimally in polynomial time. The set $X$ could be as large as $O(n)$ which in some cases makes it difficult to bound the amount of interaction between the set $X$ and the optimum solution.

In this paper we propose a framework which combines the best of both worlds---the generalized Lipton-Tarjan and generalized Baker's approaches. In particular, we show 
that for most bidimensional problems there is a polynomial time algorithm that given a graph $G$ and an $\epsilon > 0$ outputs a vertex set $X$ of size $\epsilon \cdot OPT$ such that the treewidth of $G \setminus X$ is $f(\epsilon)$. Because the {\em size} of $X$ is bounded, the interaction between $X$ and the optimum solution is bounded trivially. Since $X$ is only removed once, the difficulty faced by a recursive approach vanishes. In our framework to obtain EPTASs, we demand that the problem in question is ``reducible'', which is nothing else than that the set $X$ can be removed from the graph, disturbing the optimum solution by at most $O(\epsilon \cdot OPT)$. Finally, our algorithm to compute $X$ does not require an approximation algorithm for the problem in question, and relies only on a sublinear treewidth bound. For most problems, such a bound can be obtained via bidimensionality, whereas for some problems that are not bidimensional, one can obtain the sublinear treewidth bound directly.



Our new framework allows   to obtain EPTASs on (apex)-minor-free graphs for all problems covered by the previous framework, as well as for several packing problems, partial covering problems and problems that are neither closed under taking minors nor contractions. For an example consider the {\sc Maximum Degree Preserving Spanning Tree} problem where given a graph $G$ the objective is to find a spanning tree such that the number of vertices which have the same degree in the tree as in the input graph is maximized.
 {\sc Maximum Degree Preserving Spanning Tree} is neither closed under taking minors nor contractions, but one can still 
apply our framework to obtain an EPTAS for this problem. For another example, consider {\sc Cycle Packing}, where given a graph $G$ the objective is to find the maximum number of vertex disjoint cycles in $G$. For this problem, it is not clear how to directly apply the previous framework to obtain an EPTAS. On the other hand, using our framework to obtain an EPTAS for this problem is easy. More generally, we give an EPTAS for the {\sc Vertex-${\cal H}$-Packing} problem, defined as follows.
Let ${\cal H}$ be a finite set of connected graphs such that at least one graph in ${\cal H}$ is planar. 
Input to {\sc Vertex-${\cal H}$-Packing} is a graph $G$ and the objective is to find a maximum size collection of vertex disjoint subgraphs $G_{1},\ldots,G_{k}$ of $G$ such that each of them contains some graph from ${\cal H}$ as a minor. To the best of our knowledge no EPTASs for {\sc Cycle Packing}, {\sc Vertex-${\cal H}$-Packing}, {\sc Maximum Leaf Spanning Tree}, or {\sc Partial $r$-Dominating Set} were previously known, even on planar graphs. Our framework to obtain EPTASs seems to be the most general one could hope for in the context of bidimensionality and approximation.

\section{Definitions and Notations}



%




%

In this section we give various definitions which we make use of in the paper. Let~$G$ be a graph with vertex set $V(G)$ and edge set $E(G)$. A graph~$G'$ is a
 \emph{subgraph} of~$G$ if~$V(G') \subseteq V(G)$ and~$E(G') \subseteq E(G)$. The subgraph~$G'$ is called an \emph{induced subgraph} of~$G$ if~$E(G')
 = \{ uv \in E(G) \mid u,v \in V(G')\}$, in this case, $G'$~is also called the subgraph \emph{induced by~$V'$} and denoted by~$G[V']$. For a vertex set $S$, by $G \setminus S$ we denote $G[V(G) \setminus S]$. A graph class ${\cal G}$ is {\em hereditary} if for any graph $G \in {\cal G}$ all induced subgraphs of $G$ are in ${\cal G}$.
  By $N(u)$ we denote (open) neighborhood of $u$, that is, the set of all vertices adjacent to $u$. Similarly, by $N[u]$ we denote $N(u) \cup \{u\}$.  
For a subset $D \subseteq V$, we define $N[D]=\cup_{v\in D} N[v]$ and $N(D) = N[D] \setminus D$. 
The {\it distance} $d_G(u,v)$ between two vertices $u$ and $v$ of $G$ is the length of the shortest path in $G$ from $u$ to $v$. Define $B_G^r(v)$ to be the set of vertices within distance at most $r$ from $v$, including $v$ itself. For a vertex set $S$ define $B_G^r(v) = \bigcup_{v \in S} B_G^r(v)$.


%

\paragraph{Minors.}
Given an edge  $e=xy$ of a graph $G$, the graph  $G/e$ is obtained from  $G$ by contracting the edge $e$. That is, the endpoints $x$  and $y$ are replaced by a new vertex $v_{xy}$
which  is  adjacent to the old neighbors of $x$ and $y$ (except from $x$ and $y$).  A graph $H$ obtained by a sequence of edge-contractions is said to be a \emph{contraction} of $G$.  We denote it by $H\leq_{c} G$. A graph $H$ is a {\em minor} of a graph $G$ if $H$ is the contraction of some subgraph of $G$ and we denote it by $H\leq_{m} G$. We say that a graph $G$ is {\em $H$-minor-free} if $G$ does not contain $H$ as a minor. We also say that a graph class ${\cal G}_H$ is {\em $H$-minor-free} (or, excludes $H$ as a minor) when all its members are $H$-minor-free. An \emph{apex graph} is a graph obtained from a planar graph $G$ by adding a vertex and making it adjacent to some of the vertices of $G$. A graph class ${\cal G}_H$ is \emph{apex-minor-free} if ${\cal G}_H$ excludes a fixed apex graph $H$ as a minor. Let us remark that every planar, and more generally, graph of bounded genus, is an   $H$-minor-free graph for some fixed apex graph $H$.

\paragraph{\bf Grids and their triangulations.}
Let  $r$ be a positive integer, $r\geq 2$. The
\emph{$(r\times r)$-grid} is the Cartesian product of two paths of
lengths $r-1$.  A vertex of a  grid is a {\em corner} if it has degree $2$. Thus each
$(r\times r)$-grid has 4 corners. A vertex of a  $(r\times
r)$-grid is called {\em internal} if it has degree 4, otherwise it
is called {\em external}.
Let $\Gamma_{r}$ be the graph obtained from the  $(r\times r)$-grid by
triangulating internal faces of the $(r\times r)$-grid such that all internal vertices become  of degree $6$,
all non-corner external vertices are of degree 4,
and then one corner of degree two is joined by edges with all vertices
of the external face. The graph $\Gamma_6$ is shown in Fig.~\ref{fig-gamma-k}.
\begin{figure}[ht]
\begin{center}
\scalebox{.69}{\includegraphics{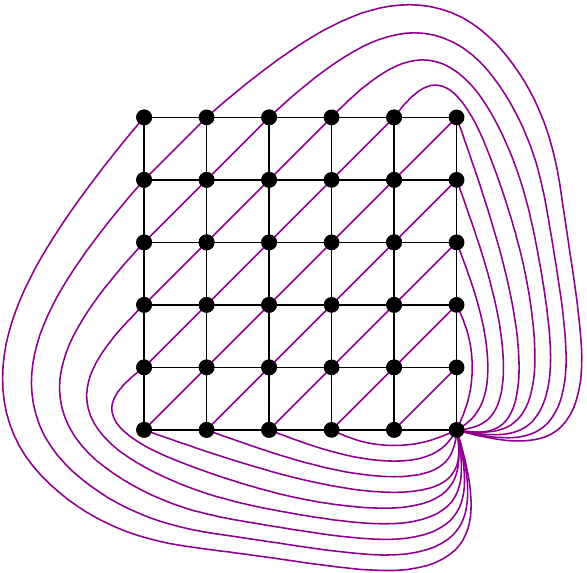}}
\end{center}
\caption{The graph $\Gamma_{6}$.}
\label{fig-gamma-k}
\end{figure}

\paragraph{\bf Treewidth.}
A \emph{tree decomposition} of a graph $G$ is a pair $(\mathcal{X},T)$, where $T$
is a tree and ${\cal X}=\{X_{i} \mid i\in V(T)\}$ is a collection of subsets
of $V$ such that the following conditions are satisfied.
\begin{enumerate}\setlength\itemsep{-1.2mm}
\item $\bigcup_{i \in V(T)} X_{i} = V(G)$.
\item For each edge $xy \in (G)$, $\{x,y\}\subseteq X_i$ for some $i\in V(T)$.
\item For each $x\in V(G)$ the set $\{ i \mid x \in X_{i} \}$ induces a connected subtree of $T$.
\end{enumerate}
The \emph{width} of the  tree decomposition is $\max_{i \in V(T)}\,|X_{i}| - 1$. The \emph{treewidth} of a
graph $G$, ${\bf tw}(G)$, is the minimum width over all tree decompositions of $G$.
%

\paragraph{Counting Monadic Second Order Logic.}
\label{countmsop}
The syntax of MSO of graphs includes the logical connectives $\vee$, $\land$, $\neg$, $\Leftrightarrow $,  $\Rightarrow$, variables for vertices, edges, set of vertices and set of edges, the quantifiers $\forall$, $\exists$ that can be applied to these variables, and the following five binary relations: 
\begin{enumerate}\setlength\itemsep{-1.2mm}
\item $u\in U$ where $u$ is a vertex variable and $U$ is a vertex set variable.
\item $d \in D$ where $d$ is an edge variable and $D$ is an edge set variable.
\item $\mathbf{inc}(d,u)$, where $d$ is an edge variable,  $u$ is a vertex variable, and the interpretation is that the edge $d$ is incident on the vertex $u$.
\item $\mathbf{adj}(u,v)$, where  $u$ and $v$ are vertex variables $u$, and the interpretation is that $u$ and $v$ are adjacent.
\item Equality of variables, $=$, representing vertices, edges, set of vertices and set of edges.
\end{enumerate}
CMSO, or {\em counting monadic second-order logic} is MSO additionally equipped with an atomic formula $\mathbf{card}_{n,p}(U)$ for testing whether the cardinality of a set $U$ is congruent to $n$ modulo $p$, where $n$ and $p$ are integers independent of the input graph such that $0\leq n<p$ and $p\geq 2$.
We refer to ~\cite{ArnborgLS91,Courcelle90,Courcelle97} for a detailed introduction to CMSO. 

\pmin{} and \pmax{} problems are graph optimization problems where the objective is to find a maximum or minimum sized vertex or edge set satisfying a CMSO-expressible property. 
In particular, a \pmm{} graph problem $\Pi$ we are given a graph $G$ as input. The objective is to find a minimum/maximum cardinality vertex vertex/edge set $S$ such that the CMSO-expressible predicate $P_\Pi(G,S)$ is satisfied.
%
%

\paragraph{Bidimensionality and Separability.}
Our results concern graph optimization problems where the objective is to find a vertex or edge set that satisfies a feasibility constraint and maximizes or minimizes a problem-specific objective function. For a problem $\Pi$ and vertex (edge) set $S$ let $\phi_\Pi(G,S)$ be the feasibility constraint returning {\bf true} if $S$ is feasible and {\bf false} otherwise. Let $\kappa_\Pi(G,S)$ be the objective function. 
In most cases, $\kappa_\Pi(G,S)$ will return $|S|$. We will only consider problems where every instance has at least one feasible solution. Let ${\cal U}$ be the set of all graphs. 
For a graph optimization problem $\Pi$ let $\pi : {\cal U} \rightarrow \mathbb{N}$ be a function returning the objective function value of the optimal solution of $\Pi$ on $G$. We say that a problem $\Pi$ is {\em minor-closed} if $\pi(H) \leq \pi(G)$ whenever $H$ is a minor of $G$. Similarly, we say $\Pi$ is {\em contraction-closed} if $\pi(H) \leq \pi(G)$ whenever $H$ is a contraction of $G$. We now define bidimensional problems.

\begin{definition}[\cite{DFHT05,FominGTesa09}]
  A graph optimization problem $\Pi$ is minor-bidimensional if 
  \begin{enumerate}
  \setlength{\itemsep}{-1.2mm}
  \item $\Pi$ is minor-closed,
  \item there is a $\delta > 0$ such that $\pi(R) \geq \delta r^2$ for the $(r \times r)$-grid $R$. In other words, the value of the solution on $R$ should be at least   $ \delta r^2$.
  \end{enumerate}
  \noindent
  A graph optimization problem $\Pi$ is called \emph{contraction-bidimensional} if 
 \begin{enumerate}
 \setlength{\itemsep}{-1.2mm}
  \item $\Pi$ is contraction-closed,
  \item there is $\delta > 0$ such that $\pi(\Gamma_r) \geq \delta r^2$. 
  \end{enumerate}
  In either case, $\Pi$ is called \emph{bidimensional}.  
\end{definition}

Demaine and Hajiaghayi~\cite{DH05} define the  \emph{separation} property for problems, and show how separability together with bidimensionality is useful to obtain EPTASs on $H$-minor-free graphs. In our setting a slightly weaker notion of separability is sufficient. In particular the following definition is a reformulation of the requirement $3$ of the definition of separability in~\cite{DH05} and similar to the definition used in~\cite{F.V.Fomin:2010oq} to obtain kernels for bidimensional problems. 

\begin{definition} \label{def:sep}
A minor-bidimensional problem $\Pi$ has the \emph{separation} property if given any graph $G$ and a partition of $V(G)$ into $L \uplus S \uplus R$ such that $N(L) \subseteq S$ and $N(R) \subseteq S$, and given an optimal solution $OPT$ to $G$, $\pi(G[L]) \leq \kappa_\Pi(G[L], OPT \cap L) + \cO(|S|)$ and $\pi(G[R]) \leq \kappa_\Pi(G[R], OPT \cap R) + \cO(|S|)$.
\end{definition}
For contraction-bidimensional parameters we have a slightly different definition of the separation property. For a graph $G$ and a partition of $V(G)$ into $L \uplus S \uplus R$ such that $N(L) \subseteq S$ and $N(R) \subseteq S$, we define $G_L$ ($G_R$ ) to be the graph obtained from $G$ by contracting every connected component of $G[R]$ ($G[L]$) into the vertex in $S$ with  the lowest index.  
\begin{definition}  \label{def:contrasep}
A contraction-bidimensional problem has the \emph{separation} property if given any graph $G$ and a partition of $V(G)$ into $L \uplus S \uplus R$ such that $N(L) \subseteq S$ and $N(R) \subseteq S$, and given an optimal solution $OPT$ to $G$, $\pi(G_L) \leq \kappa_\Pi(G_L, OPT \setminus R) + \cO(|S|)$ and $\pi(G_R) \leq \kappa_\Pi(G_R, OPT \setminus L) + \cO(|S|)$. 
\end{definition}

In Definitions~\ref{def:sep} and~\ref{def:contrasep} we slightly misused notation. Specifically, in the case that $OPT$ is an edge set we should not be considering $OPT \setminus R$ and $OPT \setminus L$ but $OPT \setminus E(G[R])$ and $OPT \setminus E(G[L])$ respectively.

\paragraph{Reducibility, $\eta$-Transversability and Graph Classes with Truly Sublinear Treewidth.}
We now define three of the central notions of this article.
 
\begin{definition}
A graph optimization problem $\Pi$ with objective function $\kappa_\Pi$ is called {\em reducible} if there exists a \pmm{} problem $\Pi'$ and a function $f : \mathbb{N} \rightarrow \mathbb{N}$ such that 
\begin{enumerate}\setlength\itemsep{-1.2mm}
 \item there is a polynomial time algorithm that given $G$ and $X \subseteq V(G)$ outputs $G'$ such that $\pi'(G') = \pi(G) \pm O(|X|)$ and $\tw(G') \leq f(\tw(G \setminus X))$,
 \item there is a polynomial time algorithm that given $G$ and $X \subseteq V(G)$, $G'$ and a vertex (edge) set $S'$ such that $P_{\Pi'}(G',S')$ holds, outputs $S$ such that $\phi_\Pi(G,S)=\textbf{true}$   and $\kappa_\Pi(G,S)=|S'| \pm O(|X|)$.
\end{enumerate}
\end{definition}

\begin{definition}
A graph optimization problem $\Pi$ is called $\eta$-Transversable if there is a polynomial time algorithm that given a graph $G$ outputs a set $X$ of size $O(\pi(G))$ such that $\tw(G \setminus X) \leq \eta$.
\end{definition}



\begin{definition}  
Graph class ${\cal G}$ has {\em truly sublinear treewidth with parameter}  $\lambda$, $ 0<\lambda<1$, if for every  $\eta>0$, there exists  
$\beta>0$ such that for any graph $G \in {\cal G}$ and $X \subseteq V(G)$ the condition  $\tw(G \setminus X) \leq \eta$ yields that 
$\tw(G) \leq \eta + \beta|X|^\lambda$. \end{definition}

\section{Partitioning Graphs of Truly Sublinear Treewidth}
We need the following well known lemma, see e.g. \cite{Bodlaender98},  on separators in graphs of bounded treewidth.  
\begin{lemma}
\label{lemma:balsep1}
Let $G$ be a graph of treewidth at most $t$ and $w : V(G) \rightarrow \mathbb{R}^{+}\cup \{0\}$  be a weight function. Then there is a partition of $V(G)$ into $L \uplus S \uplus R$ such that
\begin{itemize}\setlength\itemsep{-1.2mm}
 \item $|S| \leq t+1$, $N(L) \subseteq S$ and $N(R) \subseteq S$,
 \item every connected component $G[C]$ of $G \setminus S$ has $w(C)\leq w(V)/2$,
 \item $\frac{w(V(G))-w(S)}{3}\leq w(L)\leq \frac{2(w(V(G))-w(S))}{3}$ and $\frac{w(V(G))-w(S)}{3}\leq w(T)\leq \frac{2(w(V(G)-w(S))}{3}$.
\end{itemize}
\end{lemma}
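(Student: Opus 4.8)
The plan is to realize the separator $S$ as a single bag of a tree decomposition of $G$ of width at most $t$, and then to split the connected components of $G\setminus S$ between $L$ and $R$. Fix a tree decomposition $(\mathcal{X},T)$ of width at most $t$, so every bag satisfies $|X_i|\le t+1$. Once we choose the separating node $i^*$ and set $S=X_{i^*}$, the first bullet is immediate: $|S|\le t+1$. Moreover, deleting $i^*$ from $T$ splits $T$ into subtrees, and (using the connectivity condition of tree decompositions) no edge of $G$ can join the ``private'' vertices of two distinct subtrees after $X_{i^*}$ is removed; hence every component of $G\setminus S$ lives inside a single subtree. Assigning each whole component to $L$ or to $R$ therefore automatically yields $N(L)\subseteq S$ and $N(R)\subseteq S$. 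So the substantive work is only (i) choosing $i^*$ so that every component of $G\setminus S$ is light, and (ii) balancing $L$ against $R$.

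For (i) I would run the standard centroid/orientation argument on $T$. For an edge $\{i,j\}$, deleting it splits $T$ into $T_i\ni i$ and $T_j\ni j$; let $A_j$ be the vertices appearing in some bag of $T_j$ but in no bag of $T_i$. By the connectivity condition, $V(G)=A_i\uplus A_j\uplus(X_i\cap X_j)$, so in particular $A_i$ and $A_j$ are disjoint. Orient $\{i,j\}$ towards $j$ exactly when $w(A_j)>w(V(G))/2$; since $A_i$ and $A_j$ are disjoint, at most one endpoint qualifies, so this is a consistent partial orientation. A tree has no directed cycle, so following out-edges from any starting node must terminate at a node $i^*$ with no out-edge. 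For this $i^*$ every subtree $T_j$ hanging off $i^*$ has $w(A_j)\le w(V(G))/2$. Putting $S=X_{i^*}$, each component $C$ of $G\setminus S$ is contained in one $A_j$ (any edge lies inside a single bag, and a bag other than $X_{i^*}$ meets only one subtree), so $w(C)\le w(V(G))/2$, which is the second bullet.

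For (ii) write $W'=w(V(G))-w(S)$ for the total weight of the components of $G\setminus S$. Because $w(L)+w(R)=W'$, the two inequalities of the third bullet are together equivalent to the single requirement $W'/3\le w(L)\le 2W'/3$. I would order the components arbitrarily and add them to $L$ one at a time, stopping as soon as $w(L)\ge W'/3$. If some individual component already has weight in $[W'/3,2W'/3]$, take $L$ to be that component alone and we are done. Otherwise every component weighs either less than $W'/3$ or more than $2W'/3$; in the former case the greedy prefix overshoots the threshold $W'/3$ by less than the weight of the last component added, hence by less than $W'/3$, so it stops with $w(L)\in[W'/3,2W'/3)$, as needed.

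The step I expect to be the crux is reconciling the component bound from (i) with the balancing in (ii): the greedy argument needs that \emph{no} single component exceeds $2W'/3$, whereas the separator only guarantees $w(C)\le w(V(G))/2$. These two are compatible precisely when the chosen bag is not too heavy, namely when $w(S)\le w(V(G))/4$; indeed then $2W'/3=\tfrac{2}{3}\bigl(w(V(G))-w(S)\bigr)\ge \tfrac{2}{3}\cdot\tfrac{3}{4}w(V(G))=w(V(G))/2\ge w(C)$, so no component is too heavy and the split goes through verbatim. This is the one place where the argument must be handled with care (or the separator refined) so as to land in the stated $\tfrac13$–$\tfrac23$ window of $w(V(G))-w(S)$. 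By contrast, the remaining bookkeeping for (i) --- the partition identity $V(G)=A_i\uplus A_j\uplus(X_i\cap X_j)$, the identification $A_j=\bigl(\bigcup_{a\in T_j}X_a\bigr)\setminus X_{i^*}$, and the non-adjacency of distinct $A_j$'s --- is routine and follows directly from the connectivity condition of the tree decomposition.
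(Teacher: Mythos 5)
Your step (i) is correct and complete: the orientation argument on a width-$t$ tree decomposition produces a bag $S=X_{i^*}$ with $|S|\le t+1$ such that every component of $G\setminus S$ lies inside a single $A_j$ and hence has weight at most $w(V(G))/2$; assigning whole components to $L$ or $R$ then gives $N(L)\subseteq S$ and $N(R)\subseteq S$, so the first two bullets hold. (Note that the paper itself offers no proof of this lemma --- it cites it as well known from Bodlaender's survey --- so the attempt can only be judged against the statement itself.) The problem is the third bullet, and the gap you flag at the end is not something that can be ``handled with care'': it is the entire content of the statement in the regime $w(S)>w(V(G))/4$, and there your construction genuinely fails. Concretely, let $G$ be the path $v_1-c-v_2$ with $w(v_1)=w(v_2)=1$, $w(c)=0$, and take the width-$1$ decomposition with bags $X_1=\{v_1,c\}$ and $X_2=\{c,v_2\}$. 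Under your rule neither orientation of the edge qualifies (since $w(A_1)=w(A_2)=1\not>w(V(G))/2$), so both nodes are sinks and your algorithm may output $S=\{v_1,c\}$. Then $G\setminus S$ has the single component $\{v_2\}$ of weight $1$, while $w(V(G))-w(S)=1$, so every assignment of whole components gives $w(L)\in\{0,1\}$, never inside $\left[\tfrac13,\tfrac23\right]$. In fact with this decomposition \emph{no} bag satisfies the third bullet; the valid separators are $S=\{c\}$ (the adhesion $X_1\cap X_2$) or $S=\{v_1,v_2\}$, neither of which your procedure can produce. So the missing piece is a genuinely different selection of $S$ when the chosen bag leaves a component $C$ with $w(C)>\tfrac23\bigl(w(V(G))-w(S)\bigr)$ --- for instance refining the separator by recursing inside $G[C\cup N(C)]$ (which still has treewidth at most $t$, with $N(C)$ contained in the bag $X_{i^*}$), or an exchange/minimality argument over all sets of size at most $t+1$ rather than over bags --- together with a termination argument; this is the actual crux of the lemma as stated, and it is absent.

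Two mitigating remarks. First, your argument as written does prove the weaker version of the third bullet in which both bounds are taken relative to $w(V(G))$ rather than $w(V(G))-w(S)$: if some component has weight at least $w(V(G))/3$, it alone can serve as $L$ (its weight is at most $w(V(G))/2$, and the rest then weighs at most $\tfrac23 w(V(G))$); otherwise your greedy prefix overshoots $w(V(G))/3$ by less than $w(V(G))/3$. Second, this weaker form is all the paper ever uses: in the proof of Lemma~\ref{lem:theLemma} only the bounds $|L\cap X|\le 2k/3$ and $|R\cap X|\le 2k/3$ are invoked, and these follow from $w(L),w(R)\le\tfrac23 w(V(G))$ with $w$ the indicator weight of $X$ and $|X|\le k$. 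So your proof establishes a statement sufficient for the paper's application, but not the lemma as it is stated.
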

The next lemma is crucial in our analysis.

\begin{lemma}\label{lem:theLemma} Let ${\cal G}$ be a hereditary graph class of truly sublinear treewidth with parameter $\lambda$. For any $\epsilon < 1$ there is a $\gamma$ such that for any $G \in {\cal G}$ and $X \subseteq V(G)$ with $\tw(G \setminus X) \leq \eta$ there is a $X' \subseteq V(G)$ satisfying $|X'| \leq \epsilon|X|$ and for every connected component $C$ of $G \setminus X'$ we have $|C \cap X| \leq \gamma$ and
$|N(C)| \leq \gamma$.
Moreover $X'$ can be computed from $G$ and $X$ in polynomial time, where the polynomial is independent of  $\epsilon$, $\lambda$ and $\eta$. 
\end{lemma}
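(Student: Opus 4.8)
The plan is to build $X'$ by recursively applying the balanced separator lemma (Lemma~\ref{lemma:balsep1}) to $G$, using the weight function $w$ equal to the indicator of $X$ (so $w(V(G))=|X|$), and to peel off separators until every surviving piece contains few vertices of $X$. Concretely, I would process a piece $G[P]$ as follows. If $|P\cap X|\le \gamma_0$ for a threshold $\gamma_0$ fixed later, stop and declare $P$ a leaf. Otherwise, since $\mathcal{G}$ is hereditary we have $G[P]\in\mathcal{G}$ and $\tw(G[P]\setminus X)\le \tw(G\setminus X)\le \eta$, so truly sublinear treewidth gives $\tw(G[P])\le \eta+\beta\,|P\cap X|^{\lambda}$. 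Lemma~\ref{lemma:balsep1} then yields a separator $S_P$ with $|S_P|\le \eta+1+\beta\,|P\cap X|^{\lambda}$ whose removal splits $P$ into pieces each with at most $\tfrac{2}{3}|P\cap X|$ vertices of $X$; add $S_P$ to $X'$ and recurse. Since each call reduces the surviving $X$-vertices by a constant factor, the recursion terminates, every leaf piece (hence every component $C$ of $G\setminus X'$) satisfies $|C\cap X|\le\gamma_0$, and each separator is computable in polynomial time independent of $\epsilon,\lambda,\eta$.

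The first thing to verify is $|X'|\le \epsilon|X|$. I would charge each internal separator $S_P$ to the $X$-vertices of $P$: since an internal piece has $|P\cap X|>\gamma_0$, the cost $|S_P|$ is a sublinear fraction of $|P\cap X|$. Summing $\sum_P|S_P|$ over internal pieces grouped by the dyadic scale of $|P\cap X|$, the number of pieces of scale $\approx 2^{j}$ is $\cO(|X|/2^{j})$ (the heavy pieces are upward closed in the recursion tree), so their total cost is $\cO\big(|X|\,2^{j(\lambda-1)}\big)$; as $\lambda<1$ this is a decreasing geometric series dominated by its smallest scale $\gamma_0$, giving $|X'|\le |X|\cdot\big(\tfrac{c(\eta+1)}{\gamma_0}+c\beta\gamma_0^{\lambda-1}\big)$ with \emph{no} logarithmic loss. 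Both terms tend to $0$ as $\gamma_0\to\infty$, so fixing $\gamma_0=\gamma_0(\epsilon,\eta,\lambda,\beta)$ large enough forces $|X'|\le\epsilon|X|$.

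The main obstacle is the boundary bound $|N(C)|\le\gamma$. The naive estimate $N(C)\subseteq\bigcup_{j} S_{P_j}$ over the root-to-leaf path only gives $|N(C)|\le\sum_j|S_{P_j}|=\cO(|X|^{\lambda})$, because the top separators have size $\Theta(|X|^{\lambda})$ and a priori a component could touch all of one. To get a bound independent of $|X|$ one must exploit that $\tw(G\setminus X)\le\eta$ forces $X$ to be spread out: a leaf component has at most $\gamma_0$ vertices of $X$, hence treewidth at most $\eta+\beta\gamma_0^{\lambda}$, and such a small-$X$ piece should be incident to only boundedly many vertices of each large separator. I would try to enforce this by running the recursion with a weight function that, besides $X$, also weights the \emph{current boundary} of the piece, so that Lemma~\ref{lemma:balsep1} splits the accumulated boundary roughly evenly at each step; the aim is to make the boundary obey a recurrence of the form $b_{i+1}\le \tfrac{2}{3}b_i+\eta+1+\beta m_i^{\lambda}$ with $m_i\le(2/3)^i|X|$. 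The key computation is that the injected terms $\beta m_i^{\lambda}$ decay geometrically while being damped by the $(2/3)$ contraction, so the solution telescopes to $b_{\text{leaf}}=\cO(\eta+\beta\gamma_0^{\lambda})$, a constant.

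The delicate point — the step I expect to be the real work — is that balancing the $X$-weight (needed for the cardinality and size bounds) and balancing the boundary weight (needed for the degree bound) are competing objectives for a single separator, so one cannot literally optimize both at once and the recurrence above does not come for free. Reconciling them requires either prioritizing the boundary in the weighting while separately controlling the resulting slowdown in the decrease of $|P\cap X|$, or moving a bounded number of high-incidence vertices directly into the core $X'$ and charging them against the sublinear treewidth budget (as in the grid intuition, where the constraint $\tw(G\setminus X)\le\eta$ makes low-$X$ regions structurally small and thus of short boundary). Once the boundary is shown to be $\cO(\eta+\beta\gamma_0^{\lambda})$, taking $\gamma:=\max\{\gamma_0,\ c(\eta+\beta\gamma_0^{\lambda})\}$ yields both $|C\cap X|\le\gamma$ and $|N(C)|\le\gamma$ simultaneously, and the total running time is the $\cO(\log|X|)$-depth recursion of polynomial-time separator computations, independent of $\epsilon,\lambda,\eta$ as required.
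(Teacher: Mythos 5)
Your first phase (balancing only the $X$-weight) does give $|X'| \le \epsilon|X|$ and $|C\cap X| \le \gamma_0$ --- your dyadic-scale charging argument is a valid alternative to the paper's accounting --- but, as you yourself flag, the bound $|N(C)|\le\gamma$ is never established, and this is not a deferrable technicality: it is exactly the part of the statement that the applications rely on (Lemma~\ref{lem:transversapprox} enumerates all candidate sets $Z=N(C)$ of size at most $\gamma$, which is only possible because $\gamma$ is independent of $|X|$). Your proposed repair --- a second weight on the current boundary and a recurrence $b_{i+1}\le \tfrac23 b_i + \eta+1+\beta m_i^\lambda$ --- breaks down precisely at the point you call delicate: Lemma~\ref{lemma:balsep1} balances a single weight function, and you do not show how one separator can simultaneously split the $X$-mass and the accumulated boundary, nor carry out either of the two alternative fixes you sketch. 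So the proposal has a genuine gap.

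The paper closes this gap with one idea your proposal lacks: after extracting the separator $S$, it recurses on $G[L\cup S]$ and $G[R\cup S]$ with the \emph{new} distinguished sets $S\cup(X\cap L)$ and $S\cup(X\cap R)$; that is, the boundary is folded into the set that the next separator must balance. Then only one weight function is ever needed, and a leaf piece whose current distinguished set has size at most $\gamma$ automatically satisfies both $|C\cap X|\le\gamma$ and $|N(C)|\le\gamma$: for a component $C$ of $G\setminus X'$ inside that piece, every vertex of $N(C)$ is a separator vertex lying in the piece, hence belongs to its distinguished set. The price is that a subproblem's parameter shrinks only to $\alpha k + |S| \le \alpha k + \eta + \beta k^\lambda + 1$ rather than to $\alpha k$, so your clean geometric-sum accounting no longer applies; the paper instead proves the strengthened inductive bound $T_\gamma(k)\le \epsilon k - \delta k^\lambda$, where the potential term $-\delta k^\lambda$ absorbs the injected separator sizes, using $\rho=\min_{1/3\le\alpha\le 2/3}\bigl(\alpha^\lambda+(1-\alpha)^\lambda\bigr)>1$ and choosing $\delta$ and $\gamma$ accordingly. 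A further point you gloss over: ``each separator is computable in polynomial time'' is not immediate, since Lemma~\ref{lemma:balsep1} presupposes a tree decomposition and treewidth is NP-hard to compute; the paper invokes the approximation algorithm of Feige et al. and absorbs the extra $\sqrt{\log}$ factor by passing from $(\beta,\lambda)$ to $(\beta',\lambda')$ with $\lambda<\lambda'<1$, a step any correct write-up also needs.
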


\begin{proof}
For any $\gamma \geq 1$, define $T_\gamma : \mathbb{N} \rightarrow \mathbb{N}$ such that $T_\gamma(k)$ is the smallest integer such that if $G \in {\cal G}$ and there is a $X \subseteq V(G)$ with $\tw(G \setminus X) \leq \eta$ and $|X| \leq k$, then there is a $X' \subseteq V(G)$ of size at most $T_\gamma(k)$ such that for every connected component $C$ of $G \setminus X'$ we have $|C \cap X| \leq \gamma$ and $|N(C)| \leq \gamma$.
Informally, $T_\gamma(k)$ is the minimum size of a vertex set $X'$ such that every connected component $C$ of $G \setminus X'$ has at most $\gamma$ neighbours in $X'$ and contains at most $\gamma$ vertices of $X$, if we know that deleting the $k$-sized vertex set $X$ from $G$ yields a graph of treewidth $\eta$. Furthermore, since ${\cal G}$ is a hereditary graph class of truly sublinear treewidth with parameter $\lambda$ and 
$\tw(G \setminus X) \leq \eta$, there exists a constant $\beta$ such that $\tw(G) \leq \eta + \beta|X|^\lambda$. 
We will make choices for the constants $\delta$, $\gamma$ and $\rho$ based on $\eta$, $\lambda$, $\beta$ and $\epsilon$. Our aim is to show that $T_\gamma(k) \leq \epsilon k$ for every $k$.

Observe that for any numbers $a>0$, $b>0$, we have $a^\lambda + b^\lambda > (a+b)^\lambda$ since $\lambda < 1$. Thus we have $\rho = \min_{1/3 \leq \alpha \leq 2/3}\alpha^\lambda + (1-\alpha)^\lambda > 1$. We choose $\delta = \frac{(2\epsilon+1)(\beta+\eta+1))}{\rho - 1}$ and $\gamma=(\frac{3\delta}{\epsilon})^{\frac{1}{1-\lambda}}$. 
If $\tw(G \setminus X) \leq \eta$ and $|X| \leq \gamma$ then we set $X'=\emptyset$, so $T_\gamma(k) = 0 \leq \epsilon k$ for $k \leq \gamma$. We now show that if $k \geq \gamma / 3$ then $T_\gamma(k) = 0 \leq \epsilon k - \delta k^\lambda$ by induction on $k$. For the base case if $\gamma / 3 \leq k \leq \gamma$ then the choice of $\gamma$ implies the following inequality.
$$\epsilon k - \delta k^\lambda \geq \epsilon \frac{\gamma}{3} - \delta \gamma^\lambda \geq 0 = T_\gamma(k)$$

We now consider $T_\gamma(k)$ for $k > \gamma$. Let $G \in {\cal G}$ and $X \subseteq V(G)$ with $\tw(G \setminus X) \leq \eta$ and $|X| \leq k$. The treewidth of $G$ is at most $\eta + \beta k^\lambda$. Construct a weight function $w : V(G) \rightarrow \mathbb{N}$ such that $w(v)=1$, when $v \in X$ and $w(v)=0$ otherwise. By Lemma~\ref{lemma:balsep1}, there is a partition of $V(G)$ into $L$, $S$ and $R$ such that $|S| \leq \eta + \beta k^\lambda + 1$, $N(L) \subseteq S$, $N(R) \subseteq S$, $|L \cap X| \leq 2k/3$ and $|R \cap X| \leq 2k/3$. Deleting $S$ from the graph $G$ yields two graphs $G[L]$ and $G[R]$ with no edges between them. Thus we put $S$ into $X'$ and then proceed recursively in $G[L \cup S]$ and $G[R \cup S]$ starting from the sets $S \cup (X \cap L)$ and $S \cup (X \cap R)$ in $G[L \cup S]$ and $G[R \cup S]$ respectively. This yields the following recurrence for $T_\gamma$.
$$
T_\gamma(k) \leq \max_{1/3 \leq \alpha \leq 2/3} T(\alpha k + \eta + \beta k^\lambda + 1) + T((1-\alpha) k + \eta + \beta k^\lambda + 1) + \eta + \beta k^\lambda + 1.
$$
Observe that since $k \geq \gamma$ we have $\alpha k \geq \gamma / 3$ and $(1 - \alpha k) \geq \gamma / 3$. The induction hypothesis then yields the following inequality.
%
\begin{align*}
T_\gamma(k) & \leq & \max_{1/3 \leq \alpha \leq 2/3} & T(\alpha k + \eta + \beta k^\lambda + 1) + T((1-\alpha) k + \eta + \beta k^\lambda + 1) + \eta + \beta k^\lambda + 1 \\
& \leq & \max_{1/3 \leq \alpha \leq 2/3} & \epsilon k - \delta (\alpha k)^\lambda - \delta((1-\alpha)k)^\lambda + (2\epsilon+1)(\beta k^\lambda + \eta + 1) \\
& \leq & \max_{1/3 \leq \alpha \leq 2/3} & \epsilon k - \delta k^\lambda(\alpha^\lambda + (1-\alpha)^\lambda) + (2\epsilon+1)(\beta k^\lambda + \eta + 1) \\
& \leq & & \epsilon k - \delta k^\lambda - \delta (\rho-1) k^\lambda + (2\epsilon +1)(\beta k^\lambda + \eta + 1) \\
& \leq & & \epsilon k - \delta k^\lambda.
\end{align*}
%
The last inequality holds whenever $\delta (\rho-1) k^\lambda \geq (2\epsilon +1)(\beta k^\lambda + \eta + 1)$, which is ensured by the choice of $\delta$ and the fact that $k^\lambda \geq 1$. Thus $T_\gamma(k) \leq \epsilon k$ for all $k$. Hence there exists a set $X'$ of size at most $\epsilon k$ such that for every component $C$ of $G \setminus X'$ we have $C \cap X \leq \gamma$ and $|N(C)| \leq \gamma$. 


What remains is to show that $X'$ can be computed from $G$ and $X$ in polynomial time. The inductive proof can be converted into a recursive algorithm. The only computationally hard step of the proof is the construction of a tree-decompositon of $G$ in each inductive step. Instead of computing the treewidth exactly we use the $d^*\sqrt{\log \tw(G)}$-approximation algorithm by Feige et al.~\cite{FeigeHajLee05}, where $d^*$ is a fixed constant. Thus when we partition $V(G)$ into $L$, $S$, and $R$ using Lemma~\ref{lemma:balsep1}, the upper bound on the size of $S$ will be $d^*(\eta + \beta k^\lambda)\sqrt{\log(\eta + \beta k^\lambda)}$ instead of $\eta + \beta k^\lambda$. However, for any $\lambda < \lambda' < 1$ there is a $\beta'$ such that  $d^*(\eta + \beta k^\lambda)\sqrt{\log(\eta + \beta k^\lambda)} < \eta + \beta' k^{\lambda'}$. Now we can apply the above analysis with $\beta'$ instead of $\beta$ and $\lambda'$ instead of $\lambda$ to bound the size of the set $X'$ output by the algorithm. This concludes the proof of the lemma.
\end{proof}

The following corollary is a direct consequence of Lemma~\ref{lem:theLemma}. Nevertheless, we find it worthwhile to mention it separately.
\begin{corollary}\label{cor:theCorollary} Let ${\cal G}$ be a hereditary graph class of truly sublinear treewidth   with parameter $\lambda$. For any $\epsilon > 1$ there is a $\tau$ with $\tau = O((\frac{1}{\epsilon})^\frac{\lambda}{1-\lambda})$ such that for any $G \in {\cal G}$ and $X \subseteq V(G)$ with $\tw(G \setminus X) \leq \eta$ there is a $X' \subseteq V(G)$ satisfying $|X'| \leq \epsilon|X|$ such that $\tw(G \setminus X') \leq \tau$.
\end{corollary}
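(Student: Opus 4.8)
The plan is to invoke Lemma~\ref{lem:theLemma} to produce the set $X'$ and then to bound $\tw(G \setminus X')$ one connected component at a time. Applying the lemma with the given $\epsilon$ yields a constant $\gamma$ and a set $X' \subseteq V(G)$ with $|X'| \leq \epsilon|X|$ such that every connected component $C$ of $G \setminus X'$ satisfies $|C \cap X| \leq \gamma$ (the companion bound $|N(C)| \leq \gamma$ is not needed for this statement). Since the treewidth of a graph is the maximum of the treewidths of its connected components, it suffices to bound $\tw(G[C])$ uniformly over all components $C$ of $G \setminus X'$ and to take $\tau$ to be that bound.

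First I would fix a component $C$ of $G \setminus X'$. Because ${\cal G}$ is hereditary, $G[C] \in {\cal G}$. Deleting the at most $\gamma$ vertices of $C \cap X$ from $G[C]$ leaves $G[C \setminus X]$, which is an induced subgraph of $G \setminus X$ and therefore has treewidth at most $\eta$. Thus $C \cap X$ is a set of at most $\gamma$ vertices whose removal brings the treewidth of $G[C]$ down to at most $\eta$. I would then apply the truly sublinear treewidth property of ${\cal G}$ directly to $G[C]$: this provides a constant $\beta$ (depending only on the class and on $\eta$) with $\tw(G[C]) \leq \eta + \beta|C \cap X|^\lambda \leq \eta + \beta\gamma^\lambda$. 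Maximizing over all components gives $\tw(G \setminus X') \leq \eta + \beta\gamma^\lambda$, and I would set $\tau \defeq \eta + \beta\gamma^\lambda$.

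It then remains only to verify the claimed order of magnitude of $\tau$. Substituting the value $\gamma = (\frac{3\delta}{\epsilon})^{\frac{1}{1-\lambda}}$ from the proof of Lemma~\ref{lem:theLemma} gives $\gamma^\lambda = (\frac{3\delta}{\epsilon})^{\frac{\lambda}{1-\lambda}}$, and since $\delta$ depends on $\epsilon$ only through the bounded factor $(2\epsilon+1)$, we have $\frac{3\delta}{\epsilon} = \Theta(\frac{1}{\epsilon})$ as $\epsilon$ tends to $0$, with the hidden constant depending on $\eta$, $\beta$ and $\rho$. Hence $\tau = \eta + \beta\gamma^\lambda = O((\frac{1}{\epsilon})^{\frac{\lambda}{1-\lambda}})$, as required. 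The polynomial-time computability of $X'$ is inherited directly from the corresponding conclusion of Lemma~\ref{lem:theLemma}, and the treewidth bound for each component is obtained by inspection rather than by any further computation.

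The main obstacle — indeed essentially the only nontrivial idea — lies in the second paragraph: one must resist the naive estimate $\tw(G[C]) \leq \eta + |C \cap X| \leq \eta + \gamma$, which would only yield $\tau = O((\frac{1}{\epsilon})^{\frac{1}{1-\lambda}})$, and instead reuse the truly sublinear treewidth hypothesis at the level of each individual component. Because the deleted set $C \cap X$ has size at most $\gamma$, the sublinear dependence $\beta|C \cap X|^\lambda$ replaces $\gamma$ by $\gamma^\lambda$ in the final bound, and it is precisely this substitution that converts the exponent $\frac{1}{1-\lambda}$ into the sharper $\frac{\lambda}{1-\lambda}$ promised by the corollary.
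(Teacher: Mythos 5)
Your proposal is correct and follows essentially the same route as the paper's own proof: invoke Lemma~\ref{lem:theLemma}, reduce to a single component $C$, and use heredity plus the truly sublinear treewidth property on $G[C]$ with the deleted set $C \cap X$ of size at most $\gamma = O((\frac{1}{\epsilon})^{\frac{1}{1-\lambda}})$ to get $\tw(G[C]) = O(\gamma^\lambda)$. Your write-up merely makes explicit the step the paper leaves implicit (that $|C \cap X| \leq \gamma$ yields $\tw(G[C]) \leq \eta + \beta\gamma^\lambda$ via the sublinear treewidth hypothesis applied componentwise), which is a faithful elaboration rather than a different argument.
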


\begin{proof}
We apply Lemma~\ref{lem:theLemma} on $G$ and $X$ to obtain the set $X'$ of size $\epsilon |X|$. Observe that in the proof of  Lemma~\ref{lem:theLemma}, $\gamma = O((\frac{1}{\epsilon})^\frac{1}{1-\lambda})$. The treewidth of $G \setminus X'$ equals the maximum treewidth of a connected component $C$ of $G \setminus X'$. However $|C \cap X| \leq \gamma$ and so $\tw(G[C]) = O(\gamma^\lambda)$, concluding the proof.
\end{proof}

\section{Approximation Schemes}

\paragraph{Approximation Schemes for $\eta$-Transversable problems.}

\begin{theorem}\label{thm:eptastrans}
Let $\Pi$ be an $\eta$-transversable, reducible graph optimization problem. Then $\Pi$ has an EPTAS on every graph class ${\cal G}$ with truly sublinear treewidth.
\end{theorem}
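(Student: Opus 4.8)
The plan is to chain the three hypotheses---$\eta$-transversability, the partitioning Corollary~\ref{cor:theCorollary}, and reducibility---with an exact algorithm for \pmm{} problems on bounded-treewidth graphs, and then to check that the additive slack accumulated along the way amounts to only a $(1\pm O(\epsilon))$ multiplicative factor. Fix $G \in {\cal G}$ and a target accuracy $\epsilon$. First I would invoke $\eta$-transversability to compute in polynomial time a set $X \subseteq V(G)$ with $\tw(G \setminus X) \leq \eta$ and $|X| = O(\pi(G))$. The essential point is that $|X|$ is bounded by a constant multiple of $OPT = \pi(G)$; this is exactly what will later turn additive error into a multiplicative guarantee.

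Next I would apply Corollary~\ref{cor:theCorollary} to $G$ and $X$ with a suitably small internal parameter $\epsilon' = \Theta(\epsilon)$ (the classes of interest are hereditary, so the corollary applies), obtaining a set $X'$ with $|X'| \leq \epsilon'|X| = O(\epsilon \cdot OPT)$ and $\tw(G \setminus X') \leq \tau$, where $\tau = O((1/\epsilon)^{\lambda/(1-\lambda)})$ depends only on $\epsilon$ and $\lambda$, not on $G$. Now reducibility enters. Using the first reducibility property applied to $G$ and the set $X'$, I would compute in polynomial time an instance $G'$ of the \pmm{} problem $\Pi'$ with $\pi'(G') = \pi(G) \pm O(|X'|)$ and $\tw(G') \leq f(\tau)$, so that $\tw(G')$ is bounded by a constant depending only on $\epsilon$. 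Since $\Pi'$ is CMSO-expressible and $G'$ has bounded treewidth, I would solve $\Pi'$ optimally on $G'$ by the standard dynamic programming over a tree decomposition (Courcelle's theorem for \pmm{} problems), in time $g(f(\tau)) \cdot n^{O(1)}$, producing a set $S'$ with $P_{\Pi'}(G',S')$ true and $|S'| = \pi'(G')$. Finally the second reducibility property, again with the set $X'$, lets me lift $S'$ back to a feasible solution $S$ for $\Pi$ on $G$ with $\kappa_\Pi(G,S) = |S'| \pm O(|X'|)$.

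It then remains to bound quality and running time. Composing the estimates gives
\[
\kappa_\Pi(G,S) = |S'| \pm O(|X'|) = \pi'(G') \pm O(|X'|) = \pi(G) \pm O(|X'|) = \pi(G) \pm O(\epsilon \cdot OPT),
\]
so for a minimization (resp.\ maximization) objective $S$ is feasible with value at most $(1 + O(\epsilon))OPT$ (resp.\ at least $(1 - O(\epsilon))OPT$); rescaling $\epsilon'$ by the hidden constant (coming from the $O(\cdot)$ in transversability together with the two $O(|X'|)$ terms) turns this into a genuine $(1\pm\epsilon)$-approximation. For the time bound, every step apart from the dynamic program runs in polynomial time with exponent independent of $\epsilon$---the algorithm of Corollary~\ref{cor:theCorollary} is explicitly polynomial with the polynomial independent of the parameters---while the dynamic program contributes the factor $g(f(\tau))$ depending only on $\epsilon$. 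Hence the overall running time has the form $F(1/\epsilon)\cdot n^{O(1)}$, i.e.\ an EPTAS.

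I expect the only genuinely delicate point to be the bookkeeping of the additive errors, and in particular making the inequalities point the right way for both the minimization and maximization versions: one must check $\pi'(G') \leq \pi(G) + O(|X'|)$ together with $\kappa_\Pi(G,S) \leq |S'| + O(|X'|)$ (and the symmetric lower bounds) so that the produced feasible solution is never off from $OPT$ by more than $O(\epsilon \cdot OPT)$. Everything else is essentially a black-box composition of the already-established lemmas, with transversability supplying the crucial bound $|X| = O(OPT)$ that makes the whole scheme multiplicative rather than merely additive.
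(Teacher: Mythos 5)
Your proposal is correct and follows essentially the same route as the paper's proof: transversability to get $X$, the partitioning result (the paper invokes Lemma~\ref{lem:theLemma} directly, which is what Corollary~\ref{cor:theCorollary} packages) to get $X'$ with $|X'| \leq \epsilon'|X|$ and bounded treewidth, reducibility to pass to $G'$, an optimal CMSO/Courcelle-type solution $S'$ on the bounded-treewidth instance, lifting back via the second reducibility property, and finally rescaling $\epsilon'$ by the hidden constants ($\epsilon' = \epsilon/(2\rho^2)$ in the paper) to absorb the additive $O(|X'|)$ errors into a $(1\pm\epsilon)$ factor. The error bookkeeping you flag as the delicate point is handled in the paper exactly as you describe, by tracking $|\kappa_\Pi(G,S)-\pi(G)| \leq 2\rho|X'| \leq 2\rho^2\epsilon'\pi(G)$.
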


\begin{proof}
Let $G$ be the input to $\Pi$ and $\epsilon > 0$ be fixed. Since $\Pi$ is $\eta$-transversable there is a polynomial time algorithm that outputs a set $X$ such that $|X| \leq \rho_1 \pi(G)$ and $\tw(G) \leq \eta$, for a fixed constant $\rho_1$.  Furthermore, since ${\cal G}$ is a hereditary graph 
class of truly sublinear treewidth with parameter $\lambda$ and $\tw(G \setminus X) \leq \eta$, there exists a constant $\beta$ such that 
$\tw(G) \leq \eta + \beta|X|^\lambda$.  Let $\epsilon'$ be a constant to be selected later. By Lemma~\ref{lem:theLemma}, there exist $\gamma$, $\lambda' < 1$ and $\beta'$ depending on $\epsilon'$, $\lambda$, $\eta$ and $\beta$ such that given $G$ and $X$ a set $X'$ with the following properties can be found in polynomial time. First $|X'| \leq \epsilon' |X|$, and secondly for every component $C$ of $G \setminus X'$ we have that $C \cap X \leq \gamma$. Thus $\tw(G \setminus X') = \tau \leq \beta' \gamma^{\lambda'} + \eta$. Since $\Pi$ is reducible there exists  a \pmm{} problem $\Pi'$, a constant $\rho_2$ and a function $f : \mathbb{N} \rightarrow \mathbb{N}$ such that: 
\begin{enumerate}\setlength\itemsep{-1.2mm}
 \item there is a polynomial time algorithm that given $G$ and $X' \subseteq V(G)$ outputs $G'$ such that $|\pi'(G') - \pi(G)| \leq \rho_2 |X'|$ and $\tw(G') \leq f(\tau)$,
 \item there is a polynomial time algorithm that given $G$ and $X' \subseteq V(G)$, $G'$ and a vertex (edge) set $S'$ such that $P_{\Pi'}(G',S')$ holds outputs $S$ such that $\phi_\Pi(G,S)$ holds and $|\kappa_\Pi(G,S)-|S'|| \leq \rho_2 |X'|$.
\end{enumerate}
We constuct $G'$ from $G$ and $X'$ using the first polynomial time algorithm. Since $\tw(G') \leq f(\tau)$ we can use an extended version of Courcelle's theorem~\cite{Courcelle90,Courcelle97}  given by  Borie et al.~\cite{BoriePT92} to find an optimal solution $S'$ to $\Pi'$ in $g(\epsilon')|V(G')|$ time. By the properties of the first polynomial time algorithm, $||S'|-\pi(G)| \leq \rho |X'|$ where $\rho = \max(\rho_1,\rho_2)$. We now use the second polynomial time algorithm to construct a  solution $S$ to $\Pi$ from $G$, $X'$, $G'$ and $S'$. The properties of the second algorithm ensure $\phi_\Pi(G,S)$ holds and that $|\kappa_\Pi(G,S)-|S'|| \leq \rho |X'|$, and hence $|\kappa_\Pi(G,S)-\pi(G)| \leq 2 \rho |X'| \leq 2 \rho^2 \epsilon' \pi(G)$. Choosing $\epsilon' = \frac{\epsilon}{2\rho^2}$ yields $|\kappa_\Pi(G,S)-\pi(G)| \leq \epsilon \pi(G)$, proving the theorem.
\end{proof}

\paragraph{Approximation Schemes for Bidimensional problems.} Now we use Theorem~\ref{thm:eptastrans} to give EPTASs for bidimensional, separable and reducible problems on graphs excluding a fixed $H$ as a minor. In order to do this, we show that $H$-minor-free graphs have truly sublinear treewidth and that bidimensional and separable problems are $\eta$-transversable. To show that $H$-minor free graphs have truly sublinear treewidth we use the following result.
\begin{proposition}[\cite{DemaineFH05,Demaine:2008dq,FominGTesa09}]\label{prop:lineagrid}
Let $G$ be a connected graph excluding a fixed graph $H$ as a minor. Then there exists some constant $c$ such that if $\tw(G)\geq c\cdot r^{2}$, then $G$ contains  the $r\times r$-grid as a minor. Moreover, if $H$ is an apex graph, then $G$ does not contain $\Gamma_{r}$ as a contraction.
\end{proposition}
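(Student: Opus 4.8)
The ``moreover'' clause is the contraction obstruction for apex-minor-free graphs, quoted here as a known result, so the task is to recall the argument behind it. In the form that features the clause ``$G$ does not contain $\Gamma_{r}$ as a contraction'' it reads as the obstruction statement $(O)$: if $H$ is an apex graph, $G$ excludes $H$ as a minor, and $G$ does \emph{not} contain $\Gamma_{r}$ as a contraction, then $\tw(G)=O(r^{2})$. I would prove $(O)$ through its contrapositive, namely the forcing statement $(F)$ that $\tw(G)\ge c\,r^{2}$ produces a contraction $\Gamma_{r}\leq_{c}G$; the two are logically equivalent, and $(F)$ is the form the plan below actually manufactures. Thus the ``does not contain'' clause is precisely the working hypothesis of the statement I am establishing.

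The plan is to combine the first clause of the proposition with the structure theory of apex-minor-free graphs. First I would use the first clause---which yields an $(r\times r)$-grid minor once $\tw(G)\ge c\,r^{2}$---to pass from large treewidth to a large grid minor, and then refine this minor to a \emph{flat wall} together with its compass. Second, since $H$ is an apex graph, excluding $H$ sharply restricts the Graph Minors decomposition: every almost-embeddable piece may carry only a \emph{bounded} number of apex vertices and bounded-width vortices. I would use this to locate a large, flat subgrid to which no apex attaches in an apex-like pattern, for otherwise that apex together with the grid would realise $H$ as a minor. Third, on this flat subgrid I would contract each branch set to a single vertex and absorb the bounded apex and vortex material into neighbouring bags, and then argue that the densely connected planar compass collapses under contractions onto a partially triangulated grid containing $\Gamma_{r'}$ with $r'=\Omega(r)$, which in turn contracts to $\Gamma_{r}$.

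The hard part will be the passage from a grid \emph{minor} to a grid \emph{contraction}. Minors permit vertices and edges to be deleted, whereas contractions forbid discarding anything, so every part of $G$ lying outside the grid branch sets must be \emph{absorbed} by contractions rather than thrown away, and the triangulation edges of $\Gamma_{r}$ must be produced from genuine adjacencies inside the compass rather than merely from the abstract grid pattern. This is exactly where the apex hypothesis is indispensable: it forbids the configurations that would block the collapse---most notably a bundle of vertex-disjoint paths routed outside the grid and joining far-apart grid vertices, since such a structure together with the grid would exhibit the excluded apex graph $H$ as a minor. Making this precise requires a careful analysis of how the compass of the flat wall attaches to its boundary, together with a counting argument bounding the number of apices and the vortex widths in terms of $H$; contracting the resulting bounded-size pieces then forces the full triangulation pattern of $\Gamma_{r}$. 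Assembling these ingredients establishes $(F)$, and hence the equivalent statement $(O)$ that an apex-minor-free graph which does not contain $\Gamma_{r}$ as a contraction has treewidth $O(r^{2})$.
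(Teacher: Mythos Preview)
The paper does not supply a proof of this proposition; it is quoted as a known result from the cited references, so there is no in-paper argument to compare against. Your plan for the forcing statement $(F)$---use the grid-minor clause to obtain a large grid minor, invoke the apex-minor-free structure theorem to locate a flat wall with only boundedly many apices and bounded-width vortices, then absorb the surplus material by contractions to collapse the compass onto $\Gamma_{r'}$---is indeed the shape of the argument in the cited sources (in particular \cite{FominGTesa09}), and you correctly isolate the genuine difficulty, namely that a contraction may not discard anything and that the apex hypothesis is exactly what rules out the long jumps that would obstruct the collapse.

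One comment on your first paragraph: the ``moreover'' clause as printed is almost certainly a typo and should read ``then $G$ \emph{contains} $\Gamma_{r}$ as a contraction,'' under the same hypothesis $\tw(G)\geq c\,r^{2}$ as the grid-minor clause. Read literally as a conclusion, ``$G$ does not contain $\Gamma_{r}$ as a contraction'' is false already for $G=\Gamma_{r}$, which is planar and hence apex-minor-free. Your workaround---recasting the phrase as a \emph{hypothesis} in an obstruction statement $(O)$ and then passing to its contrapositive $(F)$---lands on the right target, but the detour is unnecessary once the misprint is recognised: the intended ``moreover'' is simply $(F)$ itself, parallel in form to the first clause.
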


\begin{corollary}\label{cor:minortreewidth}
Let ${\cal G}_H$ be a class of graphs excluding a fixed graph $H$ as a minor. Then ${\cal G}_H$ has truly sublinear treewidth with $\lambda = \frac{1}{2}$.
\end{corollary}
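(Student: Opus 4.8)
The plan is to combine the excluded-grid relationship of Proposition~\ref{prop:lineagrid} with a block-counting argument showing that a small deletion set $X$ cannot escape a large grid minor of $G$. Fix $\eta>0$; I will produce a $\beta=\beta(\eta)$ that works in the definition of truly sublinear treewidth. Since treewidth is the maximum over connected components, and deleting $X$ from a component only removes the $X$-vertices lying in that component, I may assume $G$ is connected: replacing $G$ by its component of largest treewidth and $X$ by the restriction of $X$ to that component leaves $\tw(G)$ unchanged while only decreasing $|X|$ and $\tw(G\setminus X)$. Set $t\defeq\tw(G)$ and apply Proposition~\ref{prop:lineagrid} to obtain an $(r\times r)$-grid minor of $G$; the crucial point is that for $H$-minor-free graphs the side of the largest grid minor controls the treewidth \emph{linearly}, so I may take $r=\Omega(t)$, equivalently $t=O(r)$. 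Fix a minor model: pairwise-disjoint connected branch sets $\{V_{(i,j)}:1\le i,j\le r\}$ realizing the grid, where each grid-adjacency is witnessed by an edge of $G$ between the corresponding branch sets.

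The heart of the argument is a counting step. Set $b\defeq\eta+1$ and partition the $(r\times r)$-grid into roughly $(r/b)^2$ axis-aligned sub-blocks, each a $(b\times b)$-subgrid, recalling that a $(b\times b)$-grid has treewidth $b=\eta+1>\eta$. Call a block \emph{damaged} if $X$ meets the union of the branch sets of its grid vertices. If some block $B$ were undamaged, then all of its branch sets survive in $G\setminus X$ and remain connected, and every edge witnessing an adjacency inside $B$ has both endpoints outside $X$ and hence survives; thus $G\setminus X$ would contain a $(b\times b)$-grid minor, giving $\tw(G\setminus X)\ge\eta+1$ and contradicting $\tw(G\setminus X)\le\eta$. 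Therefore every block is damaged. Since the branch sets of distinct blocks are disjoint, each vertex of $X$ lies in at most one block, so the number of damaged blocks is at most $|X|$. Hence $(r/b)^2=O(|X|)$, that is $r=O\big((\eta+1)\sqrt{|X|}\big)$.

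Combining this with $t=O(r)$ from Proposition~\ref{prop:lineagrid} yields $\tw(G)=t=O\big((\eta+1)\sqrt{|X|}\big)$. Absorbing the $\eta$-dependent constant into $\beta$ is legitimate because $\beta$ is allowed to depend on $\eta$, and the degenerate case $|X|=0$ is immediate since then $\tw(G)=\tw(G\setminus X)\le\eta$; this gives $\tw(G)\le\eta+\beta|X|^{1/2}$ and establishes truly sublinear treewidth with $\lambda=\tfrac12$.

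The step I expect to be the crux is the passage $t=O(r)$: it is precisely the \emph{linearity} of grid minors in treewidth for $H$-minor-free graphs that forces the exponent $\tfrac12$. With only a quadratic bound $t=O(r^2)$ the same block count would still give $r=O(\sqrt{|X|})$, but would only conclude $\tw(G)=O(|X|)$, i.e. $\lambda=1$; it is the linear regime that makes the two square-root factors reinforce rather than cancel. The only other point needing care is the routine verification that an undamaged block genuinely induces a grid minor of $G\setminus X$—namely that deleting $X$ preserves both the connectivity of the surviving branch sets and the edges realizing the grid adjacencies inside the block—which is straightforward once the minor model is fixed.
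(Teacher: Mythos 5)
Your proof is correct and takes essentially the same route as the paper's: both rely on the \emph{linear} grid-minor theorem for $H$-minor-free graphs (treewidth $t$ forces an $\Omega(t)\times\Omega(t)$ grid minor), partition that grid into $(\eta+1)\times(\eta+1)$ blocks, and observe that each vertex of $X$ meets the branch sets of at most one block, so too few damaged blocks would leave an intact block giving $\tw(G\setminus X)\geq \eta+1$. The differences are purely presentational—the paper argues by contradiction while you count damaged blocks directly and additionally spell out the reduction to connected $G$ and the $|X|=0$ case.
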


\begin{proof}
Let $\rho$ be a constant such that any graph $G \in {\cal G}_H$ of treewidth at least $t$ contains a $\rho t \times \rho t$ grid as a minor. Let $G \in {\cal G}_H$ have a vertex subset $X$ such that $\tw(G \setminus X) \leq \eta$ for a fixed constant $\eta$. We prove that $\tw(G) \leq \frac{\eta+1}{\rho}\lceil \sqrt{|X|+1} \rceil$. Suppose not. Then, by Proposition~\ref{prop:lineagrid}, $G$ contains a $(\eta+1)\lceil \sqrt{|X|+1} \rceil \times (\eta+1)\lceil \sqrt{|X|+1} \rceil$ grid as a minor. Hence $G$ contains at least $|X|+1$ disjoint $\eta+1 \times \eta+1$ grids as a minor. The set $X$ is disjoint from at least one of these grids, and hence $G \setminus X$ contains a  $\eta+1 \times \eta+1$ grid as a minor and has treewidth at least $\eta+1$, yielding the desired contradiction.
\end{proof}

For every fixed integer $\eta$ we define the {\sc $\eta$-Transversal} problem as follows. Input is a graph $G$, and the objective is to find a minimum cardinality vertex set $S \subseteq V(G)$ such that $\tw(G \setminus S) \leq \eta$. We now give a polynomial time constant factor approximation for the  {\sc $\eta$-Transversal} problem on  on $H$-minor-free graphs.

\begin{lemma}\label{lem:transversapprox} 
For every integer $\eta$ and fixed graph $H$ there is a constant $c$ and a polynomial time $c$-approximation algorithm for the {\sc $\eta$-Transversal} problem on $H$-minor-free graphs. The polynomial in the running time only depends on $H$ and $\eta$. 
\end{lemma}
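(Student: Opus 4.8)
The plan is to exploit that $H$-minor-free graphs have truly sublinear treewidth, combined with a recursive balanced-separator decomposition that finishes the job on pieces of bounded treewidth by exact dynamic programming. Fix an optimal $\eta$-transversal $OPT$ of $G$ and write $k=|OPT|=\pi(G)$. The starting observation is two-sided. On one side, since treewidth is monotone under taking induced subgraphs and $OPT\cap W$ is an $\eta$-transversal of $G[W]$ for every $W\subseteq V(G)$, Corollary~\ref{cor:minortreewidth} gives $\tw(G[W])\le \eta+\beta|OPT\cap W|^{1/2}$; so whenever a piece has large treewidth it also traps many vertices of $OPT$, and a balanced separator of that piece has size $O(\sqrt{|OPT\cap W|})$. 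On the other side, the $\eta$-transversal number is minor-monotone (deletion and contraction do not increase it, since $\tw$ does not increase under these operations), so by Proposition~\ref{prop:lineagrid} a piece of treewidth exceeding $cr^2$ contains an $(r\times r)$-grid minor and hence has $\eta$-transversal number $\Omega(r^2)=\Omega(\tw(G[W]))$. Thus for pieces of large enough treewidth the forced number of deletions $\pi(G[W])$ and the treewidth are within constant factors of one another.

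The algorithm is the natural recursion on a piece $G[W]$. First I run the approximation algorithm of Feige et al.~\cite{FeigeHajLee05} to obtain a tree decomposition of width $O(\tw(G[W])\sqrt{\log \tw(G[W])})$. If this width lies below a threshold $\theta=\theta(\eta,H)$ I stop and compute an exact optimal $\eta$-transversal of $G[W]$: the property $\tw(G[W]\setminus S)\le\eta$ is minor-closed, hence characterised by a finite forbidden-minor set and therefore CMSO-expressible, so minimising $|S|$ is a \pmin{} problem solvable optimally on the bounded-treewidth piece in time $g(\theta)|W|$ by Borie et al.~\cite{BoriePT92}, an extension of Courcelle's theorem~\cite{Courcelle90}. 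Otherwise $\tw(G[W])>\theta$, and I apply Lemma~\ref{lemma:balsep1} to extract a balanced separator $S_W$, add it to the output, and recurse on the components of $G[W]\setminus S_W$. Choosing $\theta$ large enough (via the grid lower bound above) guarantees that every piece on which we recurse forces $\pi(G[W])=\Omega(\theta)$ deletions, so $|S_W|=O(\tw(G[W])\sqrt{\log})=O(\pi(G[W])\sqrt{\log})$ can be charged against the $\Omega(\theta)$ vertices of $OPT$ confined to $W$.

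Feasibility is immediate: the union of all separators together with the base-case solutions leaves every surviving component with treewidth at most $\eta$. The delicate part, and what I expect to be the main obstacle, is bounding the total output by $O(k)$ rather than by $O(k)$ times the recursion depth: a level-by-level charging of $|S_W|$ to $OPT\cap W$ only yields $O(k\,\mathrm{polylog}\,n)$, because a vertex of $OPT$ may be recharged at every level of the root-to-leaf path whose pieces all still contain it. To obtain a genuine constant factor I would drive the recursion so that $|OPT\cap W|$ shrinks by a constant factor from a piece to each child, and then run exactly the telescoping recurrence of Lemma~\ref{lem:theLemma}: writing $R(m)$ for the cost on a piece with $|OPT\cap W|\le m$, the recurrence $R(m)\le \max_{1/3\le\alpha\le2/3} R(\alpha m)+R((1-\alpha)m)+O(\sqrt m)$ solves to $R(m)=O(m)$ precisely because $\alpha^{1/2}+(1-\alpha)^{1/2}\ge\rho>1$, the same superadditivity gain that powers Lemma~\ref{lem:theLemma}, with the $\sqrt{\log}$ loss from \cite{FeigeHajLee05} absorbed by replacing $\lambda=1/2$ with any $\lambda'<1$ exactly as there. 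The subtle point is that the separator must be balanced against the \emph{unknown} set $OPT$; I would resolve this using that a balanced separator already shrinks the pieces and that, by the grid lower bound, each recursed piece carries a constant fraction $\Omega(\theta)$ of its forced deletions, which suffices to push the superadditive recurrence through with a constant $c$ depending only on $\eta$, $H$, and the Feige constant.
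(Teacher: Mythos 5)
Your overall strategy --- a top-down recursion with balanced separators on an $H$-minor-free graph, finishing with exact CMSO/dynamic-programming solutions on bounded-treewidth leaves --- is genuinely different from the paper's proof, but it founders on exactly the obstacle you flag and do not resolve: the separator of Lemma~\ref{lemma:balsep1} can only be balanced with respect to a weight function the algorithm can evaluate, while the telescoping recurrence $R(m)\leq\max_{1/3\leq\alpha\leq 2/3}R(\alpha m)+R((1-\alpha)m)+O(\sqrt{m})$ needs the split to be balanced with respect to $|OPT\cap W|$, which is unknown. If you balance by vertex count, the analysis cannot rule out that every separator along a root-to-leaf path peels off only a vanishing fraction of the optimum; as $\alpha\to 0$ the superadditivity gain $\alpha^{\lambda}+(1-\alpha)^{\lambda}\geq\rho>1$ tends to $1$, and the telescoping collapses. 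Your fallback --- that every recursed piece satisfies $|OPT\cap W|=\Omega(\theta^2)$ (via truly sublinear treewidth), or $\Omega(\theta)$ (via the grid bound) --- only caps the \emph{per-level} charging rate at $O(\sqrt{\log\theta}/\theta)$; with vertex-balanced recursion depth $\Theta(\log n)$ the total is still $O(k\log n\cdot\sqrt{\log\theta}/\theta)$, i.e.\ precisely the polylogarithmic loss you conceded at the outset. Letting $\theta$ grow like $\log n$ to absorb the depth destroys polynomiality of the exact solver on the leaves (whose dependence on the width is at least exponential), and balancing against a constant-factor proxy for $OPT$ is circular, since such a proxy is exactly what the lemma is asked to produce.

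The paper sidesteps this dilemma by not running a top-down recursion at all: it uses Lemma~\ref{lem:theLemma} purely \emph{existentially} against the unknown optimum $X$. Applying it with $\epsilon=1/2$ guarantees a set $X'$ with $|X'|\leq|X|/2$ such that every component of $G\setminus X'$ contains at most $\gamma$ vertices of $X$ and has at most $\gamma$ neighbours. Since $|X'|<|X|$ and $X$ is minimum, some component $C$ of $G\setminus X'$ has $\tw(G[C])>\eta$, yet $\tw(G[C])=O(\sqrt{\gamma})$ because $|C\cap X|\leq\gamma$, and its boundary $Z=N(C)\subseteq X'$ has constant size at most $\gamma$. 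So the algorithm brute-forces over all $n^{O(\gamma)}$ candidate sets $Z$, finds a component $C$ of $G\setminus Z$ with $\eta<\tw(G[C])=O(\sqrt{\gamma})$, solves {\sc $\eta$-Transversal} \emph{optimally} on $G[C]$ by Borie et al.~\cite{BoriePT92}, adds $X_C\cup N(C)$ to the output, deletes $C\cup N(C)$, and repeats while the remaining graph has treewidth above $\eta$. The charging is then immediate and avoids any recurrence: the pieces $C_1,\dots,C_t$ are disjoint and each has treewidth exceeding $\eta$, so the optimum must hit every one of them, giving $t\leq|X|$ and total cost at most $\sum_i\left(|X\cap C_i|+\gamma\right)\leq(\gamma+1)|X|$. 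Knowledge of $X$ is needed only inside the existence argument; algorithmically it is replaced by a polynomial brute force over constant-size boundaries. This conversion of Lemma~\ref{lem:theLemma} from an algorithmic tool into an existential one, exploited via exhaustive search over a constant-size set, is the idea your proposal is missing.
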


\begin{proof}
Let $X$ be a smallest vertex set in $G$ such that $\tw(G \setminus X) \leq \eta$. By Lemma~\ref{lem:theLemma} with $\epsilon = 1/2$ there exists a $\gamma$ depending only on $H$ and $\eta$ and a set $X'$ with $|X'| \leq |X| / 2$ such that for any component $C$ of $G \setminus X'$, $|C \cap X| \leq \gamma$ and $|N(C)| \leq \gamma$. Since $X$ is the {\em smallest} set such that $\tw(G \setminus X) \leq \eta$, there is a component $C$ of $G \setminus X'$ with treewidth strictly more than $\eta$. Let $Z = N(C)$ and observe that $Z \subseteq X'$ is a set of size at most $\gamma$ such that $C$ is a conneced component of $G \setminus Z$.

The algorithm proceeds as follows. It tries all possibilities for $Z$ and looks for a connected component $C$ of $G \setminus Z$ such that $\eta < \tw(G[C]) = O(\sqrt{\gamma})$. It solves the {\sc $\eta$-Transversal} problem optimally on $G[C]$ by noting that {\sc $\eta$-Transversal} can be formulated as a \pmin{} problem and applying the algorithm by Borie et al~\cite{BoriePT92}. Let $X_C$ be the solution obtained for $G[C]$. The algorithm adds $X_C$ and $N(C)$ to the solution and repeats this step on $G \setminus (C \cup N(C))$ as long as $\tw(G) \geq \eta$. 

Clearly, the set returned by the algorithm is a feasible solution. We now argue that the algorithm is a $(\gamma+1)$-approximation algorithm. Let $C_1$, $C_2$, $\ldots C_t$ be the components found by the algorithm in this manner. Since $X$ must contain at least one vertex in each $C_i$ it follows that $t \leq |X|$. Thus $\bigcup_{i \leq t} N(C_i) \leq \gamma|X|$. Furthermore for each $C$, $|X_C| \leq |X \cap C|$ and the proof follows.
\end{proof}

We use Lemma~\ref{lem:transversapprox} in conjunction with the following lemma, which is a combination of Lemmata ~3.2 and 3.3 of~ \cite{F.V.Fomin:2010oq}.

\begin{lemma}[ \cite{F.V.Fomin:2010oq}]\label{lem:minor_bidem} Let $\Pi$ be a minor- (contraction-) bidimensional problem with the separation property and $H$ be a (apex) graph. There exists a constant $\eta$ such that for every $G$ excluding   $H$ as a minor, there is a subset $X \subseteq V(G)$  such that $|X|=\cO(\pi(G))$, and $\tw(G \setminus X)\leq \eta$. 
\end{lemma}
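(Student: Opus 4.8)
The plan is to prove the statement by a recursive balanced-separator decomposition whose accounting mirrors that of Lemma~\ref{lem:theLemma}, with the role of the sublinear-treewidth exponent now played by the exponent $1/2$ coming from bidimensionality. The first step is to record a \emph{sublinear} treewidth bound of the form $\tw(G) \leq c'\sqrt{\pi(G)}$. Indeed, by Proposition~\ref{prop:lineagrid} a graph excluding $H$ (resp.\ an apex graph $H$) with treewidth at least $t$ contains a grid of side $\Omega(t)$ as a minor (resp.\ contains $\Gamma_{\Omega(t)}$ as a contraction); feeding this into minor- (contraction-) bidimensionality gives $\pi(G) = \Omega(t^2)$, that is $\tw(G) = \cO(\sqrt{\pi(G)})$. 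I then fix $\eta$ large enough that $\tw(G) > \eta$ forces $\pi(G)$ to exceed a threshold $k_0$ to be chosen, and for $\tw(G)\le\eta$ I simply output $X=\emptyset$.

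For the recursive step, given $G$ with $\tw(G) > \eta$ I would fix an optimal solution $OPT$ and apply Lemma~\ref{lemma:balsep1} with the weight function equal to $1$ on $OPT$ and $0$ elsewhere. This yields a partition $L \uplus S \uplus R$ with $N(L), N(R) \subseteq S$, with $|S| \le \tw(G)+1 = \cO(\sqrt{\pi(G)})$, and with $|OPT\cap L|, |OPT\cap R| \le \tfrac{2}{3}\pi(G)$. Since $|OPT\cap L| + |OPT\cap R| \geq \pi(G) - |S|$, both parts also retain $\Omega(\pi(G))$ of the solution, so each lies in the window $[\tfrac{1}{3}\pi(G) - \cO(\sqrt{\pi(G)}),\, \tfrac{2}{3}\pi(G)]$. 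I put $S$ into $X$ and recurse on $G[L]$ and $G[R]$ (for the contraction case, on $G_L$ and $G_R$ as in Definition~\ref{def:contrasep}). The separation property bounds the solution value of each child, $\pi(G[L]) \le \kappa_\Pi(G[L], OPT\cap L) + \cO(|S|) \le |OPT \cap L| + \cO(|S|)$ and symmetrically for $R$, while the fact that $OPT\cap L$ and $OPT\cap R$ are disjoint subsets of $OPT$ ensures the two child measures sum to at most $\pi(G) + \cO(|S|)$.

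Let $F(k)$ denote the largest $|X|$ produced over all inputs with $\pi(G) \le k$. The above yields, for $k > k_0$, a recurrence of the shape $F(k) \le \max_{1/3 \le \alpha \le 2/3}\, F(\alpha k + c_1\sqrt k) + F((1-\alpha)k + c_1\sqrt k) + c_2\sqrt k$, with $F(k)=0$ for $k \le k_0$. I would solve this with the ansatz $F(k) \le a k - b\sqrt k$ by induction on $k$. Exactly as in Lemma~\ref{lem:theLemma}, the strict superadditivity of the square root, $\min_{1/3\le\alpha\le2/3}(\sqrt{\alpha} + \sqrt{1-\alpha}) = \mu > 1$, produces a surplus $-b(\mu-1)\sqrt k$ that absorbs both the separator cost $c_2\sqrt k$ and the lower-order slack from the $\cO(|S|)$ terms, provided $b$ is at least $(c_2 + \cO(a))/(\mu-1)$ and $a$ is large enough to validate the base window $k \in [k_0/3, k_0]$. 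This gives $F(k) = \cO(k)$, i.e.\ $|X| = \cO(\pi(G))$, and since $S$ separates $L$ from $R$ at every step, $G\setminus X$ decomposes as a disjoint union of the leaf graphs, each of treewidth at most $\eta$.

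The step I expect to be the main obstacle is precisely the amortized accounting of the previous paragraph: a single balanced separator costs $\Theta(\sqrt{\pi})$, and the naive linear induction $|X| \le a\,\pi(G)$ fails because these costs could in principle accumulate over the $\cO(\log \pi)$ levels of the recursion. The resolution — and the only genuinely delicate point — is the same square-root superadditivity gap $\mu>1$ exploited in Lemma~\ref{lem:theLemma}, together with the observation that a balanced split forces \emph{both} children to retain a constant fraction of $OPT$, so that the negative correction term $-b\sqrt k$ strictly improves as one descends the tree. A secondary but routine obstacle is the bookkeeping for the contraction-bidimensional case, where one must check that contracting the components of $G[R]$ (resp.\ $G[L]$) into $S$ when forming $G_L$ (resp.\ $G_R$) inflates the relevant solution measures by only $\cO(|S|)$, so that Definition~\ref{def:contrasep} applies with the same asymptotics and the identical recurrence goes through.
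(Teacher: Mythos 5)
Your proposal is correct and takes essentially the same route as the source: the paper does not prove Lemma~\ref{lem:minor_bidem} itself but imports it as the combination of Lemmata~3.2 and~3.3 of~\cite{F.V.Fomin:2010oq}, and your two steps --- the bound $\tw(G)=\cO(\sqrt{\pi(G)})$ from bidimensionality plus the grid theorem, followed by the recursive $OPT$-weighted balanced-separator decomposition amortized against the superadditivity gap $\min_{1/3\le\alpha\le 2/3}\bigl(\sqrt{\alpha}+\sqrt{1-\alpha}\bigr)>1$ --- are exactly those two lemmata, i.e.\ the same engine as the paper's own Lemma~\ref{lem:theLemma}. One remark: you invoke the \emph{linear} grid theorem (treewidth $t$ forces a grid of side $\Omega(t)$), which is indeed what the paper itself uses in Corollary~\ref{cor:minortreewidth}, rather than the literal ``$\tw(G)\geq c\cdot r^{2}$'' statement of Proposition~\ref{prop:lineagrid} (apparently a typo), which would only yield $\tw(G)=\cO(\pi(G))$; your usage is the correct one.
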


Together Lemmata~\ref{lem:transversapprox} and~\ref{lem:minor_bidem} yield the following corollary.

\begin{corollary}\label{cor:minortransvers} 
Let $\Pi$ be a minor- (contraction-) bidimensional problem with the separation property and $H$ be a (apex) graph. There exists a constant $\eta$ such that for every $G$ excluding $H$ as a minor, $\Pi$ is $\eta$-transversable on $H$-minor-free graphs.
\end{corollary}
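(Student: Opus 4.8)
The plan is to combine Lemma~\ref{lem:minor_bidem} with Lemma~\ref{lem:transversapprox}: the former guarantees that a small transversal \emph{exists}, while the latter supplies a polynomial-time algorithm that \emph{finds} one of comparable size.

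First I would invoke Lemma~\ref{lem:minor_bidem} to fix the constant $\eta$. Since $\Pi$ is a minor- (contraction-) bidimensional problem with the separation property and $H$ is a (apex) graph, that lemma produces a constant $\eta$, depending only on $\Pi$ and $H$, such that every $G$ excluding $H$ as a minor admits a vertex set $X$ with $|X| = \cO(\pi(G))$ and $\tw(G \setminus X) \leq \eta$. Rephrased in terms of the {\sc $\eta$-Transversal} problem, this says exactly that the optimum value $\mathrm{opt}_\eta(G)$, i.e.\ the minimum size of a set whose deletion brings the treewidth down to at most $\eta$, satisfies $\mathrm{opt}_\eta(G) \leq |X| = \cO(\pi(G))$.

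Next I would apply Lemma~\ref{lem:transversapprox} with this same $\eta$ and the same graph $H$. As that lemma holds for every integer $\eta$, it yields a constant $c$ (depending only on $\eta$ and $H$) together with a polynomial-time algorithm that, on input $G$, returns a set $X'$ with $\tw(G \setminus X') \leq \eta$ and $|X'| \leq c \cdot \mathrm{opt}_\eta(G)$. Chaining the two estimates gives $|X'| \leq c \cdot \mathrm{opt}_\eta(G) = \cO(\pi(G))$, the hidden constant absorbing both $c$ and the constant from Lemma~\ref{lem:minor_bidem}. This is precisely the guarantee demanded by $\eta$-transversability: a polynomial-time algorithm outputting a set of size $\cO(\pi(G))$ whose removal leaves treewidth at most $\eta$.

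There is no serious obstacle here; the corollary is essentially a bookkeeping combination of the two preceding lemmas. The only subtlety is the order of the quantifiers: the constant $\eta$ is \emph{first} determined by the bidimensionality-and-separation argument of Lemma~\ref{lem:minor_bidem}, and only \emph{afterwards} supplied to the transversal approximation of Lemma~\ref{lem:transversapprox}. This ordering is legitimate precisely because Lemma~\ref{lem:transversapprox} is stated uniformly over all integers $\eta$. It is also worth stressing that we never actually compute the witnessing set $X$ of Lemma~\ref{lem:minor_bidem}; it serves only to certify the bound $\mathrm{opt}_\eta(G) = \cO(\pi(G))$ on the optimum, and the constant-factor approximation algorithm then does the rest.
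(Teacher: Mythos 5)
Your proposal is correct and follows exactly the paper's intended argument: the paper itself derives this corollary by stating that Lemmata~\ref{lem:transversapprox} and~\ref{lem:minor_bidem} ``together yield'' it, which is precisely your combination of the existential bound on the {\sc $\eta$-Transversal} optimum from Lemma~\ref{lem:minor_bidem} with the constant-factor approximation algorithm of Lemma~\ref{lem:transversapprox}. Your explicit attention to the quantifier ordering (fixing $\eta$ first, then invoking the approximation algorithm for that $\eta$) is a correct and welcome elaboration of what the paper leaves implicit.
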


Finally, combining Theorem~\ref{thm:eptastrans} with Corollaries~\ref{cor:minortreewidth} and~\ref{cor:minortransvers} implies the main theorem of this article. 

\begin{theorem}\label{thm:minoreptas} Let $\Pi$ be a reducible minor- (contraction-) bidimensional problem with the separation property and $H$ be a (apex) graph. There is an EPTAS for $\Pi$ on the class of graphs excluding $H$ as a minor.
\end{theorem}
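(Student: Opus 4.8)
The plan is to assemble the theorem directly from the machinery already developed, since all the hard technical content has been pushed into the preceding lemmas and corollaries. The final statement is, in essence, a composition: it asserts an EPTAS for any reducible, (minor-/contraction-)bidimensional problem $\Pi$ with the separation property on graphs excluding a fixed (apex) graph $H$ as a minor. My strategy is to verify that each hypothesis of Theorem~\ref{thm:eptastrans} — namely that $\Pi$ be $\eta$-transversable and reducible, and that it be considered on a graph class of truly sublinear treewidth — is supplied by the earlier results, and then simply invoke that theorem.

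First I would fix the class ${\cal G}_H$ of graphs excluding $H$ as a minor and apply Corollary~\ref{cor:minortreewidth} to conclude that ${\cal G}_H$ has truly sublinear treewidth with parameter $\lambda = \tfrac{1}{2}$; this discharges the requirement on the graph class in Theorem~\ref{thm:eptastrans}. Next I would use Corollary~\ref{cor:minortransvers}, which combines the approximation algorithm for {\sc $\eta$-Transversal} (Lemma~\ref{lem:transversapprox}) with the bidimensionality-based treewidth bound (Lemma~\ref{lem:minor_bidem}), to obtain a constant $\eta$ for which $\Pi$ is $\eta$-transversable on ${\cal G}_H$. The reducibility of $\Pi$ is assumed outright in the statement. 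With all three hypotheses in hand, Theorem~\ref{thm:eptastrans} yields an EPTAS for $\Pi$ on ${\cal G}_H$, which is exactly the claim.

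The one point demanding genuine care — and what I expect to be the main obstacle — is the bookkeeping across the two bidimensionality regimes. For the contraction-bidimensional case, $H$ must be an apex graph, and one must check that both Proposition~\ref{prop:lineagrid} (the excluded-contraction variant, which requires apex-freeness) and the contraction-version of the separation property (Definition~\ref{def:contrasep}) feed correctly into Lemma~\ref{lem:minor_bidem} and onward into $\eta$-transversability. The minor-closed and contraction-closed cases use structurally different grid-domination arguments ($\Gamma_r$ versus the plain $(r\times r)$-grid), so I would be careful that the constant $\eta$ produced by Corollary~\ref{cor:minortransvers} is uniform and depends only on $H$ and $\Pi$, not on the input graph or on $\epsilon$. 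Once that uniformity is confirmed, the composition is immediate and the EPTAS follows with running time $g(\epsilon)\cdot |V(G)|^{O(1)}$ as guaranteed by Theorem~\ref{thm:eptastrans}.
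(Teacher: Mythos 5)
Your proposal matches the paper's proof exactly: the paper derives Theorem~\ref{thm:minoreptas} by combining Theorem~\ref{thm:eptastrans} with Corollaries~\ref{cor:minortreewidth} and~\ref{cor:minortransvers}, which is precisely your composition. The additional care you take over the minor- versus contraction-bidimensional bookkeeping is sound and consistent with how those corollaries are stated, so nothing further is needed.
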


\section{Applications}

\subsection{Domination and Connectivity Problems}
In {\sc $r$-Dominating Set}, we are given a graph $G$, the objective is to find a minimum size subset $S \subseteq V(G)$ such that $B_G^r(S)=V(G)$. For $r=1$, if we demand that $G[S]$ is connected we obtain the {\sc Connected Dominating Set} problem. In {\sc Connected Vertex Cover} we are given a graph $G$ and the objective is to find a minimum size subset $S \subseteq V(G)$ such that $G[S]$ is connected and every edge in $E(G)$ has at least one endpoint in $S$. It well-known that  {\sc $r$-Dominating Set}, {\sc Connected Dominating Set} and {\sc Connected Vertex Cover} are contraction-bidimensional~\cite{DFHT05}. 

Let $V(G)=L \uplus S \uplus R$ with $N(L) \subseteq S$ and $N(R) \subseteq S$. Let $G_L$ and $G_R$ be defined as in Definition~\ref{def:contrasep}. Let $Z$ be a minimum size $r$-dominating set of $G$ and $Z_L$ be a minimum size $r$-dominating set of $G_L$. We have that  $|Z_L| \leq |Z \setminus R| + |S|$ because $(Z \setminus R) \cup S$ is an $r$-dominating set of $G_L$ and hence $|Z_L| > |Z \setminus R| + |S|$ contradicts the choice of $Z_L$. 
Hence  $|Z_L| \leq |Z \setminus R| + O(|S|)$ and {\sc $r$-Dominating Set} is separable.

We now show that {\sc $r$-Dominating Set} is reducible. Given a graph $G$ and set $X$, let $G' = G \setminus X$ and let $R = B_G^r(X) \setminus X$. Clearly $\tw(G') = \tw(G \setminus X)$. The annotated problem $\Pi'$ is to find a minimum sized set $S' \subseteq V(G')$ such that every vertex in $V(G') \setminus (S \cup R)$ is of distance at most $r$ from a vertex in $S'$. Notice that for any $r$-dominating set $S$ of $G$, $S \setminus X$ is a feasible solution to $\Pi'$ on $G'$. Conversely, for any feasible solution $S'$ of $\Pi'$ on $G'$, we have that $S' \cup X$ is an $r$-dominating set of $G$. Hence {\sc $r$-Dominating Set} is reducible.

The proof that {\sc Connected Dominating Set} and {\sc Connected Vertex Cover} are separable are almost identical to the proof for  {\sc $r$-Dominating Set}  with $r=1$. One only has to note that if $Z$ is an optimal solution to $G$ then $Z \setminus R$ can be made into a feasible solution to $G_L$ by adding $S$ and then observing that $O(|S|)$ vertices are sufficient to connect the resulting connected components.

We now prove that {\sc Connected Dominating Set} is reducible. Given a graph $G$ and set $X$, let $G' = G \setminus X$ and let $R = N(X)$.  The annotated problem $\Pi'$ is to find a minimum sized set $S' \subseteq V(G')$ such that every vertex in $V(G') \setminus (S \cup R)$ has a neighbour in $S'$ and every connected component of $G'[S']$ contains a vertex in $R$. Notice that for any connected dominating set $S$ of $G$, $S \setminus X$ is a feasible solution to $\Pi'$ on $G'$. Conversely, for any feasible solution $S'$ of $\Pi'$ on $G'$, we have that $S = S' \cup X$ is a dominating set of $G$ and has at most $|X|$ connected components. Since $S$ is a dominating set it is sufficient to add $2(|X|-1)$ vertices to $S$ in order to make it a connected dominating set of $G$. Hence {\sc Connected Dominating Set} is reducible. The proof that {\sc Connected Vertex Cover} is reducible is identical.

Finally, let us remark that {\sc Connected Vertex Cover} is $0$-transversable. Given a graph $G$ we find a maximal matching in linear time and output the endpoints of the matching as $X$. Any vertex cover must contain at least one endpoint from each edge in the matching, and thus $|X| \leq 2\pi(G)$. Also, $\tw(G \setminus X) = 0$.

\begin{lemma}\label{lem:domprobs}  {\sc $r$-Dominating Set}, {\sc Connected Dominating Set} and {\sc Connected Vertex Cover} are contraction-bidimensional, separable and reducible. Thus they are $\eta$-transversable on apex-minor-free graphs. Furthermore, {\sc Connected Vertex Cover} is $0$-transversable on general graphs.
\end{lemma}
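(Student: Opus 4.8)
The plan is to establish the four assertions of the lemma in sequence, reusing the framework already developed. Contraction-bidimensionality of {\sc $r$-Dominating Set}, {\sc Connected Dominating Set} and {\sc Connected Vertex Cover} is classical, so here I would simply cite \cite{DFHT05}: each parameter is non-increasing under edge contraction and takes value $\Omega(r^2)$ on $\Gamma_r$. The remaining work is to verify the separation property and reducibility, after which $\eta$-transversability on apex-minor-free graphs follows immediately from Corollary~\ref{cor:minortransvers}; the $0$-transversability of {\sc Connected Vertex Cover} is handled separately.

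For the separation property (Definition~\ref{def:contrasep}) I would fix a partition $V(G) = L \uplus S \uplus R$ with $N(L) \subseteq S$ and $N(R) \subseteq S$, and let $Z$ be an optimal solution on $G$. For {\sc $r$-Dominating Set} the key claim is that $(Z \setminus R) \cup S$ is an $r$-dominating set of $G_L$: every vertex of $S$ lies in the set, while any vertex of $L$ whose nearest $Z$-dominator lay in $R$ must reach that dominator through $S$ (since $S$ separates $L$ from $R$) and is therefore already dominated by $S$; contracting $G[R]$ into $S$ only shortens distances, so domination is preserved. This yields $\pi(G_L) \le |Z \setminus R| + |S| = \kappa_\Pi(G_L, Z \setminus R) + \cO(|S|)$, and symmetrically for $G_R$. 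For the two connected problems the same restriction still dominates (covers) $G_L$, but the subgraph induced on $(Z \setminus R) \cup S$ may fall into several pieces; since each piece attaches to $S$, I would reconnect them with $\cO(|S|)$ extra vertices, preserving the bound.

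For reducibility I would exhibit, for each problem, an annotated \pmin{} problem $\Pi'$ on $G' = G \setminus X$ together with a marked set $R$ encoded into $G'$ by bounded gadgets, so that $\tw(G')$ stays within a function of $\tw(G \setminus X)$ and $\Pi'$ is genuinely CMSO-definable. For {\sc $r$-Dominating Set} I would take $R = B_G^r(X) \setminus X$ and let $\Pi'$ ask for a minimum $S'$ that $r$-dominates $V(G') \setminus R$, noting that distance at most $r$ is expressible for fixed $r$; the maps $S \mapsto S \setminus X$ and $S' \mapsto S' \cup X$ each change the objective by at most $|X|$, giving $\pi'(G') = \pi(G) \pm \cO(|X|)$ and the required solution lifting. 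For the connected variants I would take $R = N(X)$ and add to $\Pi'$ the clause that every connected component of $G'[S']$ meets $R$; then $S' \cup X$ is a dominating set (vertex cover) whose induced subgraph has at most $|X|$ components, each adjacent to $X$, so $2(|X|-1) = \cO(|X|)$ vertices reconnect it, while the reverse restriction $S \mapsto S \setminus X$ is feasible by the separation argument. Finally, {\sc Connected Vertex Cover} is $0$-transversable on all graphs: take a maximal matching $M$ and set $X = V(M)$, so that $|X| = 2|M| \le 2\pi(G)$ and $G \setminus X$ is edgeless.

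The main obstacle I anticipate is the reducibility of the connected problems. One must simultaneously guarantee that the annotation $R$ forces every solution component to be anchored to $X$ (so that the lift $S' \cup X$ can be repaired into a connected solution with only $\cO(|X|)$ vertices), that the reverse restriction keeps the objective within $\cO(|X|)$, and that the composite predicate—domination or covering together with the ``every component meets $R$'' connectivity clause—is CMSO-expressible on the gadget-augmented $G'$ without inflating its treewidth. The distance and ``dominated-except-on-$R$'' parts are routine; the delicate points are formulating connectivity relative to a marked set in CMSO and justifying the reconnection bound $2(|X|-1)$, which relies on $G$ being connected and on $S' \cup X$ already being a dominating set.
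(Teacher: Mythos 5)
Your proposal is correct and follows essentially the same route as the paper: citing \cite{DFHT05} for contraction-bidimensionality, using $(Z \setminus R) \cup S$ plus $O(|S|)$ reconnection vertices for separability, the annotated problems with $R = B_G^r(X) \setminus X$ (resp.\ $R = N(X)$ together with the ``every component of $G'[S']$ meets $R$'' clause) for reducibility, and the maximal-matching argument for $0$-transversability of {\sc Connected Vertex Cover}. The one point where you go beyond the paper is in flagging that the annotation $R$ must be encoded into $G'$ (e.g.\ by bounded gadgets) for $\Pi'$ to be a genuine CMSO problem on a graph — a detail the paper's proof silently glosses over, and which is indeed routine to fix.
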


\paragraph{Max Leaf Spanning Tree.}
In the {\sc Max Leaf Spanning Tree} problem we are given a connected graph $G$ and asked to find a spanning tree $T$ of $G$ maximizing the number of leaves of $T$. We could have shown that the problem is minor-bidimensional and separable, however, just as for {\sc Connected Vertex Cover}, it is easier to show that the problem is $2$-transversable directly. In particular, 
 Kleitman and West \cite{Kleitman:1991kt}  
 have shown that a connected graph which contains no spanning tree with at least $k$ leaves has at most $4k+2$ vertices of degree at least $2$. Thus given a graph we can just return all vertices of degree at least $3$, and the remaining graph will have treewidth at most $2$. Hence,  {\sc Max Leaf Spanning Tree} is $2$-transversable.

We prove that {\sc Max Leaf Spanning Tree} is reducible. Given a graph $G$ and set $X$, let $G' = G \setminus X$ and let $R = N(X)$.  The annotated problem $\Pi'$ is to find a maximum size set $S' \subseteq V(G')$ such that every vertex in $S' \setminus R$ has a neighbour outside of $S'$ and every connected component of $G \setminus S$ contains at least one vertex of $R \setminus S$. For a spanning tree $T$ of $G$ let $S$ be the set of leaves of $T$. Then $S \setminus X$ is a feasible solution to the annotated problem. On the other hand, given a feasible solution $S'$ to $\Pi'$, observe every component of $G \setminus S'$ contains a vertex of $X$. We construct a spanning forest $F$ of $G$ with at most $|X|$ components by picking a spanning tree for every component of $G \setminus S$ and for every vertex $v$ in $S \setminus R$ we connect $v$ to a neighbour outside of $S$. Notice that all vertices of $S$ are leaves of the spanning forest $F$. From $F$ we can construct a spanning tree $T$ by adding at most $|X|-1$ edges. Thus $T$ has at least $|S|-2(|X|-1)$ leaves and we conclude that   {\sc Max Leaf Spanning Tree} is reducible.

\begin{lemma}\label{lem:maxleaf} {\sc Max Leaf Spanning Tree} is $2$-transversable and reducible. \end{lemma}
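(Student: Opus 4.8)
The plan is to verify the two asserted properties in turn, reusing the correspondence sketched in the two paragraphs preceding the statement.

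\textbf{$2$-transversability.} First I would take $X$ to be the set of all vertices of $G$ of degree at least $3$, which is clearly computable in linear time. Every vertex outside $X$ has degree at most $2$ in $G$, so $G \setminus X$ has maximum degree at most $2$; such a graph is a disjoint union of paths and cycles and hence has treewidth at most $2$. To control $|X|$, I would invoke the Kleitman--West bound: since $\{v : \deg(v) \geq 3\} \subseteq \{v : \deg(v) \geq 2\}$, applying the bound with $k = \pi(G)+1$ (so that $G$ has no spanning tree with $k$ leaves) shows that $|X| \leq 4(\pi(G)+1)+2 = O(\pi(G))$. Together with $\tw(G \setminus X) \leq 2$ this gives $2$-transversability.

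\textbf{Reducibility.} Next I would exhibit the annotated problem $\Pi'$: given $G' = G \setminus X$ and $R = N(X)$, ask for a maximum set $S' \subseteq V(G')$ such that every vertex of $S' \setminus R$ has a neighbour outside $S'$ and every connected component of $G' \setminus S'$ contains a vertex of $R \setminus S'$. I would first check that this predicate is CMSO-expressible, so that $\Pi'$ is a genuine \pmm{} problem, and note that $\tw(G') = \tw(G \setminus X)$, so the treewidth requirement of the first reducibility condition holds with $f$ the identity. The substance is the two-way comparison of optima. For the forward direction I would take a spanning tree $T$ of $G$ with leaf set $S$ and argue that $S \setminus X$ is feasible for $\Pi'$, yielding $\pi'(G') \geq \pi(G) - |X|$. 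For the backward direction, from a feasible $S'$ I would build a spanning forest $F$ of $G$ by choosing a spanning tree of each component of $G \setminus S'$ and attaching each $v \in S' \setminus R$ to a neighbour outside $S'$; the anchoring clause forces each such component to meet $X$, so $F$ has at most $|X|$ components and keeps every vertex of $S'$ as a leaf. Reconnecting $F$ into a spanning tree $T$ costs at most $|X|-1$ edges and hence at most $2(|X|-1)$ leaves, producing a solution $S$ to $\Pi$ with $\kappa_\Pi(G,S) = |S'| \pm O(|X|)$; this is exactly the second reducibility condition and also yields $\pi(G) \geq \pi'(G') - 2(|X|-1)$. Combining the two inequalities gives $|\pi'(G') - \pi(G)| = O(|X|)$, completing reducibility and the lemma.

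The step I expect to be the crux is the backward construction in the reducibility argument: the two clauses defining $\Pi'$ must be calibrated precisely so that (i) the leaf set of any spanning tree, after deleting $X$, satisfies them, and (ii) any feasible $S'$ can be realised as (almost all of) the leaf set of a genuine spanning tree of $G$. The delicate points are verifying that the anchoring clause really bounds the number of components of $F$ by $|X|$, and that the edge-by-edge reconnection of $F$ into a spanning tree destroys at most two leaves per added edge, so that the additive $O(|X|)$ slack is never exceeded.
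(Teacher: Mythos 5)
Your proposal is correct and follows essentially the same route as the paper: $2$-transversability via the Kleitman--West bound applied to the vertices of degree at least $3$, and reducibility via the identical annotated problem $\Pi'$ (leaves anchored through $R = N(X)$), with the same forward direction (leaf set of a spanning tree minus $X$ is feasible) and the same backward construction (spanning forest with at most $|X|$ components, reconnected by at most $|X|-1$ edges losing at most two leaves each). The only additions beyond the paper's write-up are your explicit remarks on CMSO-expressibility and on taking $f$ to be the identity, which the paper leaves implicit.
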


\subsection{Covering and Packing Problems}

\paragraph{Minor Covering and Packing}
We give below a few generic problems each of which subsumes many problems in itself and fit in our framework. Let ${\cal H}$ be a finite set of connected graphs such that at least one graph in ${\cal H}$ is planar.

\begin{center}
\fbox{\begin{minipage}{13cm}
\noindent {\sc Vertex-${\cal H}$-Covering}\\ 
{\sl Input:} A graph $G$\\
{\sl Objective:} Find a minimum size set $S\subseteq V(G)$ such that $G \setminus S$ does not \\ 
\phantom{Objective:} contain any of the graphs from ${\cal H}$ as a minor.
\end{minipage}}
\end{center}
\noindent

\begin{center}
\fbox{\begin{minipage}{13cm}
\noindent {\sc Vertex-${\cal H}$-Packing}\\
{\sl Input:} A graph $G$.\\
{\sl Objective:} Find a maximum size collection of vertex disjoint subgraphs \\
\phantom{Objective:} $G_{1},\ldots,G_{k}$ of $G$ such that each of them contains some graph \\
\phantom{Objective:} from ${\cal H}$ as a minor.
\end{minipage}}
\end{center}
\noindent

It it easy to see that both {\sc Vertex-${\cal H}$-Covering} and {\sc Vertex-${\cal H}$-Packing} are minor-closed problems. Now, let $h$ be the size of the smallest planar graph $H$ in ${\cal H}$. By a result of Robertson et al.~\cite{RobertsonST94}, $H$ is a minor of the $(t \times t)$-grid, where $t = 14|V(H)|-24$. Consider a $(r \times r)$-grid $F$. $F$ contains $r^2/t^2$ disjoint $H$-minors. Any covering of $F$ must pick at least one vertex from each of these minors, therefore both {\sc Vertex-${\cal H}$-Covering} and {\sc Vertex-${\cal H}$-Packing} are minor-bidimensional.

We now prove that {\sc Vertex-${\cal H}$-Covering} is separable. Given a graph $G$ and a partition of $V(G)$ into $L$, $S$ and $R$ such that $N(L) \subseteq S$ and $N(R) \subseteq S$, let $Z$ be a set of minimum size such that $G \setminus Z$ contains no graph of ${\cal H}$ as a minor. Consider the smallest set $Z_L$ such that $G[L] \setminus Z_L$ contains no graph of ${\cal H}$ as a minor. If $|Z \cap L| < |Z_L|$, this contradicts the choice of $Z_L$ since $G[L] \setminus (Z \cap L)$ does not contain a graph of ${\cal H}$ as a minor. 
The proof for $Z_R$ is identical, thus {\sc Vertex-${\cal H}$-Covering} is separable.

The proof of separability of {\sc Vertex-${\cal H}$-Packing} goes along the same lines as the proof for  {\sc Vertex-${\cal H}$-Covering}, but with a few notable differences. In particular, we formalize {\sc Vertex-${\cal H}$-Packing} as a graph optimization problem where we seek an edge set $Z \subseteq E(G)$. The objective function, $\kappa_{COV}$, {\em counts} the number of connected components of the subgraph formed by $Z$ 
  that contain { at least one} copy of a graph in ${\cal H}$ as a minor, and all edge subsets are considered feasible.

Given a graph $G$ and a partition of $V(G)$ into $L$, $S$ and $R$ such that $N(L) \subseteq S$ and $N(R) \subseteq S$, let $Z$ be an edge set maximizing $\kappa_{COV}(G,Z)$ and $Z_L$ be an edge set maximizing $\kappa_{COV}(G[L],Z_L)$. By the choice of $Z_L$ we have $\kappa_{COV}(G[L],Z_L) \geq \kappa_{COV}(G[L],Z \cup E(G[L]))$. 
The proof for $Z_R$ is identical, hence {\sc Vertex-${\cal H}$-Packing} is separable.

Finally, it is easy to see that both {\sc Vertex-${\cal H}$-Covering} and {\sc Vertex-${\cal H}$-Packing} are reducible. Given   $G$ and $X$ we let $G'=G \setminus X$. For {\sc Vertex-${\cal H}$-Covering} $X$ can be added to the an optimal solution in $G'$ at the cost of $|X|$. For {\sc Vertex-${\cal H}$-Packing} at most $|X|$ of the minors of graphs in ${\cal H}$ contained a vertex in $X$ and got removed when $X$ was deleted. Expressing both problems as \pmm \ problems is routine. 

\begin{lemma}\label{lem:minorpackcover} {\sc Vertex-${\cal H}$-Covering} and {\sc Vertex-${\cal H}$-Packing} are minor-bidimensional, separable and reducible. 
\end{lemma}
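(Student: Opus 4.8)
The plan is to verify the three asserted properties---minor-bidimensionality, separability and reducibility---for each of the two problems, by organizing and slightly tightening the observations made in the preceding discussion.

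First, for minor-bidimensionality I would check the two defining conditions separately. For minor-closedness of {\sc Vertex-${\cal H}$-Covering}, I would argue by cases on the three minor operations: if $Z$ is an optimal cover of $G$, then after deleting a vertex or an edge the (restricted) set $Z$ still covers the smaller graph, and after contracting $e=uv$ one replaces $u,v\in Z$ by the contracted vertex; in each case the reduced graph minus the modified set is a minor of $G\setminus Z$, hence ${\cal H}$-minor-free. For {\sc Vertex-${\cal H}$-Packing} minor-closedness is cleaner: a packing in a minor $G'$ of $G$ lifts through the branch-set preimages to a vertex-disjoint packing of the same size in $G$, so $\pi(G')\le\pi(G)$. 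The quadratic grid bound is the Robertson--Seymour--Thomas counting argument already sketched: with $h=|V(H)|$ for the smallest planar $H\in{\cal H}$ and $t=14h-24$, the $(r\times r)$-grid contains $\lfloor r/t\rfloor^2=\Omega(r^2)$ vertex-disjoint $(t\times t)$-subgrids, each carrying an $H$-minor; every cover must hit each of them and they simultaneously form a packing, so $\pi(R)\ge\delta r^2$ for a $\delta$ depending only on ${\cal H}$.

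Second, for separability I would use the principle that restricting a global optimum to a side of the partition costs only $O(|S|)$. For {\sc Vertex-${\cal H}$-Covering} (a minimization problem, so Definition~\ref{def:sep} applies), if $Z$ is an optimal cover of $G$ then $Z\cap L$ is already feasible for $G[L]$, because $G[L]\setminus(Z\cap L)$ is a subgraph of $G\setminus Z$; hence $\pi(G[L])\le|Z\cap L|=\kappa_\Pi(G[L],Z\cap L)$, with no slack needed, and symmetrically for $R$. For {\sc Vertex-${\cal H}$-Packing} I would work with the edge-set formalization in which $\kappa_{COV}$ counts the connected components of the chosen edge set that contain an ${\cal H}$-minor. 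The delicate point is the direction of the inequality for a maximization problem: intersecting an optimal edge set $Z$ with $E(G[L])$ preserves intact every good component lying entirely inside $L$, and at most $|S|$ good components can meet $S$ since they are vertex-disjoint and each such component uses a separator vertex; therefore $\kappa_{COV}(G[L],Z\cap E(G[L]))\ge\pi(G)-|S|\ge\pi(G[L])-|S|$, which rearranges to the required $\pi(G[L])\le\kappa_{COV}(G[L],Z\cap E(G[L]))+O(|S|)$.

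Finally, reducibility is the most direct. In both cases I would set $G'=G\setminus X$ and take $\Pi'$ to be the (routine) CMSO reformulation of $\Pi$ on $G'$, so that $\tw(G')=\tw(G\setminus X)$ and $f$ may be taken to be the identity. The value relation $\pi'(G')=\pi(G)\pm O(|X|)$ then reduces to $|\pi(G\setminus X)-\pi(G)|\le|X|$: for covering this follows because an optimal cover of $G$ restricts to a cover of $G\setminus X$ and an optimal cover of $G\setminus X$ extends to a cover of $G$ by adding $X$, while for packing it follows because a packing of $G\setminus X$ is a packing of $G$ and at most $|X|$ of the packed subgraphs of an optimum of $G$ meet $X$. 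The solution-lifting algorithm returns $S'\cup X$ for covering and $S'$ unchanged for packing, perturbing the objective by at most $|X|$. I expect the only genuine obstacle to be the separability of {\sc Vertex-${\cal H}$-Packing}: getting the edge-set/component formalization right and correctly bounding, in the maximization direction, the at most $O(|S|)$ good components destroyed by the separator; all other verifications are routine once the definitions are unwound.
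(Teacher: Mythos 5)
Your treatment of minor-bidimensionality, of the separability of {\sc Vertex-${\cal H}$-Covering}, and of reducibility follows the paper essentially step for step: the same $t=14|V(H)|-24$ grid bound with $\Omega(r^2)$ disjoint $H$-minors, the same observation that $G[L]\setminus(Z\cap L)$ is a subgraph of $G\setminus Z$ and hence $Z\cap L$ is feasible for $G[L]$, and the same $G'=G\setminus X$ reduction with additive error $|X|$. The one place you genuinely diverge is exactly the step you flag as delicate, the separability of {\sc Vertex-${\cal H}$-Packing}, and there your derivation contains a false inequality. You claim $\kappa_{COV}(G[L],Z\cap E(G[L]))\ge\pi(G)-|S|$ on the grounds that good components of $Z$ inside $L$ are preserved and at most $|S|$ good components meet $S$. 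This bookkeeping forgets the good components of $Z$ that lie entirely inside $R$: they contribute to $\pi(G)=\kappa_{COV}(G,Z)$ but not to $\kappa_{COV}(G[L],Z\cap E(G[L]))$. Concretely, take ${\cal H}=\{K_3\}$, let $G$ be the disjoint union of two triangles, one in $L$ and one in $R$, with $S=\emptyset$; then $\pi(G)=2$ while $\kappa_{COV}(G[L],Z\cap E(G[L]))=1$, so your inequality reads $1\ge 2$.

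The target inequality $\pi(G[L])\le\kappa_{COV}(G[L],Z\cap E(G[L]))+|S|$ is true, but it cannot be obtained by counting alone; it needs an exchange argument that invokes the optimality of $Z$ as a solution on $G$. Let $Z_L$ be an edge set maximizing $\kappa_{COV}(G[L],\cdot)$ and set $Z'=Z_L\cup(Z\cap E(G[R]))$. Since there are no edges between $L$ and $R$, every component of the subgraph formed by $Z'$ lies inside $L$ or inside $R$, so $\kappa_{COV}(G,Z')=\kappa_{COV}(G[L],Z_L)+\kappa_{COV}(G[R],Z\cap E(G[R]))$. On the other hand, every good component of $Z$ lies in $L$, lies in $R$, or meets $S$, whence $\kappa_{COV}(G,Z)\le\kappa_{COV}(G[L],Z\cap E(G[L]))+\kappa_{COV}(G[R],Z\cap E(G[R]))+|S|$. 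Combining these two displays with $\kappa_{COV}(G,Z)\ge\kappa_{COV}(G,Z')$ and cancelling the common $R$-term yields $\pi(G[L])=\kappa_{COV}(G[L],Z_L)\le\kappa_{COV}(G[L],Z\cap E(G[L]))+|S|$, as required. (The paper's own writeup of this step is itself only a sketch---it records the inequality $\kappa_{COV}(G[L],Z_L)\ge\kappa_{COV}(G[L],Z\cap E(G[L]))$, which is trivial and points in the wrong direction---so the exchange above is what actually has to be supplied; your counting shortcut is not a valid substitute for it.)
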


{\sc Vertex-${\cal H}$-Covering} contains various problems as a special case, for example: (a) {\sc Vertex Cover} by letting ${\cal H}$ contain a single graph on a single edge, (b) {\sc Feedback Vertex Set} by setting ${\cal H}$ to contain a single graph, the complete graph on $3$ vertices $K_3$; (c) {\sc Diamond Hitting Set} by letting ${\cal H}$ contain a single graph, the comlete graph on 4 vertices $K_4$ minus a single edge. Other choices for ${\cal H}$ lead to vertex deletion into outerplanar graphs, series-parallell graphs, graphs of constant treewidth ({\sc $\eta$-Transversal}) or pathwidth. On the other hand, {\sc Vertex-${\cal H}$-Packing} contains problems like {\sc Cycle Packing} as a special case. 

\paragraph{Subgraph Covering and Packing} 
Now we consider problems about covering and packing subgraphs. These problems can be handled in much the same way as covering and packing minors. Let ${\cal S}$ be a finite set of connected graphs.

\begin{center}
\fbox{\begin{minipage}{13cm}
\noindent {\sc Vertex-${\cal S}$-Covering}\\ 
{\sl Input:} A graph $G$\\
{\sl Objective:} Find a minimum size set $S\subseteq V(G)$ such that $G \setminus S$ does not \\ 
\phantom{Objective:} contain any of the graphs from ${\cal S}$ as a subgraph.
\end{minipage}}
\end{center}
\noindent

\begin{center}
\fbox{\begin{minipage}{13cm}
\noindent {\sc Vertex-${\cal S}$-Packing}\\
{\sl Input:} A graph $G$.\\
{\sl Objective:} Find a maximum size collection of vertex disjoint subgraphs \\
\phantom{Objective:} $G_{1},\ldots,G_{k}$ of $G$ such that each of them contains some graph \\
\phantom{Objective:} from ${\cal S}$ as a subgraph.
\end{minipage}}
\end{center}
\noindent

We will not show that {\sc Vertex-$\cal S$-Covering} or {\sc Vertex-$\cal S$-Packing} are bidimensional. Instead, we will give a reduction rule, and show that instances reduced according to this rule have an $r$-dominating set of size $O(OPT)$, where $r$ is the maximum size of a graph in ${\cal S}$. Since $r$-{\sc Dominating Set} is $\eta$-transversable there is an algorithm that in polynomial time outputs a set $X \subseteq V(G)$ of size $O(OPT)$ such that $\tw(G \setminus X) \leq \eta$. Hence the pre-processed version of {\sc Vertex-$\cal S$-Covering} and {\sc Vertex-$\cal S$-Packing} is $\eta$-transversable. 

Consider the following rule, the {\em Redundant Vertex Rule}. Given as input $G$ to  {\sc Vertex-$\cal S$-Covering} or {\sc Vertex-$\cal S$-Packing} remove all vertices that are not part of any subgraph isomorphic to any graph in ${\cal S}$. We can perform the Redundant Vertex Rule in $O(|V|\cdot |{\cal S}|)$  time by looking at a small ball around every vertex $v$ and check whether the ball contains a subgraph isomorphic to a graph in $\cal S$ that contains $v$. This algorithm to check a subgraph isomorphic to a given graph containing a particular vertex appears in an article of Eppstein~\cite{Eppstein99}. 

Consider an instance $G$ of {\sc Vertex-$\cal S$-Covering} reduced according to the Redundant Vertex Rule, and let $S$ be an optimal solution to $G$. Since $X$ hits all copies of graphs in ${\cal S}$ occuring in $G$ and every vertex in $G$ appears in some copy of a graph in ${\cal S}$ it follows that $X$ is a $r$-dominating set of $G$, where $r$ is the maximum size of a graph in ${\cal S}$. Finally, consider an instance $G$ of {\sc Vertex-$\cal S$-Packing} reduced according to the Redundant Vertex Rule, and consider an optimal solution $G_{1},\ldots,G_{OPT}$ such that for every $i$, $G_i$ contains a graph
from ${\cal S}$ as a subgraph. Since every vertex occurs a subgraph of $G$ isomorphic to a graph in $H$, the selection of $G_{1},\ldots,G_{OPT}$ implies that every vertex $v$ has distance at most $r$ to some vertex in some $G_i$. Let $X=\{v_1,v_2,\ldots,v_{OPT}\}$ where $x_i \in V(G_i)$. Then every vertex $v$ has distance at most $2r$ to $X$. Thus, Lemma~\ref{lem:domprobs} yields that {\sc Vertex-$\cal S$-Covering} or {\sc Vertex-$\cal S$-Packing} are $\eta$-transversable. The proof that they are both reducible is identical to the discussion for {\sc Vertex-$\cal H$-Covering} or {\sc Vertex-$\cal H$-Packing}.

\begin{lemma}\label{lem:subgraphcover}  {\sc Vertex-$\cal S$-Covering} or {\sc Vertex-$\cal S$-Packing} pre-processed with the Redundant Vertex Rule are $\eta$-transversable and reducible.
\end{lemma}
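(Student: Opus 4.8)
The plan is to establish the two required properties separately, handling $\eta$-transversability first and then reducibility, and for each property to treat {\sc Vertex-${\cal S}$-Covering} and {\sc Vertex-${\cal S}$-Packing} in parallel. Throughout, let $r$ denote the maximum number of vertices of a graph in ${\cal S}$, and assume the input $G$ has been pre-processed with the Redundant Vertex Rule, so that every vertex of $G$ lies in some subgraph isomorphic to a member of ${\cal S}$. The guiding idea is that pre-processing forces the graph to be ``everywhere witnessed'' by copies of members of ${\cal S}$, which lets us convert an optimum of the covering/packing problem into a bounded-radius dominating set of comparable size, and then reuse the already-established transversability of bounded-radius domination.

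For $\eta$-transversability, the key step is to bound a bounded-radius domination number by $O(\pi(G))$; the conclusion then follows by feeding the resulting dominating set into the algorithm guaranteed by Lemma~\ref{lem:domprobs}. For {\sc Vertex-${\cal S}$-Covering} I would take an optimal solution $S$ and argue it is an $r$-dominating set: any vertex $v$ lies in a copy $H$ of a member of ${\cal S}$, and since $G\setminus S$ contains no such copy, $S$ must meet $V(H)$ at a vertex within distance $r$ of $v$, so $B_G^r(S)=V(G)$. For {\sc Vertex-${\cal S}$-Packing} I would take an optimal packing $G_1,\dots,G_{OPT}$ and select one representative $x_i\in V(G_i)$; by maximality of the packing every copy $H$ must intersect some $G_i$ (otherwise $H$ could be added, contradicting optimality), so every vertex is within distance $r$ of some $G_i$ and hence within distance $2r$ of $X=\{x_1,\dots,x_{OPT}\}$, giving a $2r$-dominating set of size $OPT$. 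In both cases the exhibited dominating set has size $O(\pi(G))$, so applying Lemma~\ref{lem:domprobs} to $r$-{\sc Dominating Set} (covering) and to $2r$-{\sc Dominating Set} (packing) produces in polynomial time a set $X\subseteq V(G)$ with $|X|=O(\pi(G))$ and $\tw(G\setminus X)\le\eta$.

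For reducibility, I would mirror the argument already given for {\sc Vertex-${\cal H}$-Covering} and {\sc Vertex-${\cal H}$-Packing}. Given $G$ and $X\subseteq V(G)$, set $G'=G\setminus X$, so that $\tw(G')=\tw(G\setminus X)$ and the treewidth requirement of the first reducibility condition is immediate. For covering, any solution of $G'$ together with $X$ covers $G$, and any solution of $G$ restricted to $V(G')$ covers $G'$, so the optima differ by at most $|X|$; for packing, deleting $X$ destroys at most $|X|$ of the packed subgraphs, so again the optima differ by at most $O(|X|)$. The corresponding $O(|X|)$-faithful reconstruction of a feasible solution is routine, and both problems are readily phrased as \pmm{} problems, supplying the required annotated problem $\Pi'$.

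The main obstacle is the $\eta$-transversability bound, and within it the packing case. Unlike covering, where the fact that the optimum meets every copy is forced directly, the packing argument must invoke the maximality of an \emph{optimal} packing to guarantee that every copy meets some $G_i$, and must then account for the extra factor of two in the domination radius that arises from first reaching a packed subgraph and then its chosen representative. Once the $O(\pi(G))$-size bounded-radius dominating set is in hand, the remainder reduces to a direct appeal to the established transversability of bounded-radius domination and to the minor-covering/packing reducibility template.
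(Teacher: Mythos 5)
Your proposal is correct and follows essentially the same route as the paper: both arguments use the Redundant Vertex Rule to show that an optimal cover is an $r$-dominating set and that representatives of an optimal packing form a $2r$-dominating set (via maximality), then invoke the $\eta$-transversability of bounded-radius domination from Lemma~\ref{lem:domprobs}, and finally reuse the {\sc Vertex-${\cal H}$-Covering}/{\sc Packing} template for reducibility.
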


\subsection{Partial Domination and Covering}
In the {\sc Partial $r$-Dominating Set} problem we are given a graph $G$ together with an integer $t \leq |V(G)|$. The objective is to find a minimum size set $S$ such that $|B_G^r(S)| \geq t$. In {\sc Partial Vertex Cover} we are given a graph $G$ together with an integer $t \leq |E(G)|$ and the objective is to find a minimum size vertex set $S$ such that $|\{uv \in E: u \in S \vee v \in S\}| \geq t$. We will call $\{uv \in E: u \in S \vee v \in S\}$ the set of edges {\em covered} by $S$. PTAS for {\sc Partial Vertex Cover} on planar graphs was given in 
\cite{Gandhi2004}. We are not aware of any PTAS  for  {\sc Partial $r$-Dominating Set}.

We will not show that  {\sc Partial $r$-Dominating Set} and {\sc Partial Vertex Cover} are bidimensional, instead we will directly construct a EPTASs for these problems on apex-minor-free graphs using the tools developed so far. We will use $OPT$ for the size of an optimal solution to our instances. Let $H$ be a fixed apex graph, our input graphs will exclude $H$ as a minor. We  employ an algorithm of Fomin et al.~\cite{FominLRS09}. They give an algorithm for solving  {\sc Partial $r$-Dominating Set} in time $2^{O(r\sqrt{OPT})}n^{O(1)}$ and {\sc Partial Vertex Cover} in time $2^{O(\sqrt{OPT})}n^{O(1)}$. A key part of their algorithm for {\sc Partial $r$-Dominating Set} is a polynomial time algorithm (\cite{FominLRS09}, Lemma $5$) that given a graph $G$ together with integers $t$ and $k$ returns an induced subgraph $G'$ of $G$ such that $G$ has a $k$-sized vertex set $S$ such that $|B_G^r(S)| \geq t$ if and only if $G'$ has a $k$-sized vertex set $S'$ such that $|B_{G'}^r(S')| \geq t$. Furthermore, $G'$ has a $3r$-dominating set of size $k$. Our EPTAS loops over all possible values of $k$ and for each such value produces $G_k'$ from $G$, $t$ and $k$ using Lemma $5$ of~\cite{FominLRS09}. If $G_k'$ has less than $t$ vertices, then $G_k'$ cannot have any set which covers at least $t$ vertices, and so, neither can $G$. If $G'_k$ has at least $t$ vertices, we proceed with the following subroutine.

Just as in the proof of Theorem~\ref{thm:eptastrans}, let $\epsilon'$ be a constant to be chosen later. By construction $G'_k$ has $3r$-dominating set of size $k$. Since $3r$-{\sc Dominating Set} is $\eta$-transversable there is a polynomial time algorithm that outputs a set $X$ of size at most $\rho k$ such that $\tw(G'_k \setminus X) \leq \eta$. By Lemma~\ref{lem:theLemma} there is a polynomial time algorithm that computes a set $X'$ of size at most $\epsilon' \rho k$ such that $\tw(G'_k \setminus X') \leq \delta$ for a constant $\delta$ depending only on $\eta$ and $H$. We put all vertices in $X'$ in our solution. Specifically, we remove $X'$ from $G'_k$ and put all other vertices of $B_{G'_k}^r(X')$ into a set $R$. Using standard dynamic programming (or by formulating the problem in an extended version of MSO~\cite{ArnborgLS91}) on graphs of bounded treewidth, we can find a minimum size set $S' \subseteq V(G'_k) \setminus X'$ such that $|X'| + |R \cup B_{G'_k \setminus X'}^r(S')| \geq t$ in time $f(\delta)n^{O(1)}$. The subroutine returns the set $S' \cup X'$ as a solution.

Since $G'$ is an induced subgraph of $G$, any solution $S = S' \cup X'$ returned by the subroutine covers at least $t$ vertices in $G$. We return the smallest $S$ as our $(1+\epsilon)$-approximate solution. In the iteration of the outer loop where $k=OPT$ we have that $G'_k$ has a set $Z$ of size $OPT$ that covers $t$ vertices in $G'$. Observe that $Z \setminus X'$ covers at least $t - |B_{G'_k}^r(X')|$ of $V(G'_k) \setminus B_{G'_k}^r(X')$ in the graph $G'_k \setminus X'$. Thus the solution returned by the dynamic programming algorithm has size at most $|Z \setminus X'| \leq |Z| = OPT$ and the solution returned by the subroutine in this iteration is at most $OPT+|X'| \leq OPT(1+\epsilon'\rho)$. Chosing $\epsilon' = \epsilon/\rho$ concludes the analysis of our EPTAS for {\sc Partial $r$-Dominating Set}. An EPTAS for {\sc Partial Vertex Cover} can be constructed in a similar manner.

\begin{lemma}
\label{lem:partial}
There is an EPTAS for {\sc Partial $r$-Dominating Set} and  {\sc Partial Vertex Cover} on apex-minor-free graphs. 
\end{lemma}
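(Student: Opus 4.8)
The plan is to build EPTASs for both problems directly using the tools already developed, rather than proving bidimensionality. The key external ingredient is the algorithm of Fomin et al.~\cite{FominLRS09}: their Lemma~$5$ gives a polynomial time kernelization that, given $G$, $t$ and a target solution size $k$, produces an induced subgraph $G'_k$ which preserves the answer to the decision question ``is there a $k$-sized $S$ with $|B^r_G(S)| \geq t$?'' and crucially guarantees that $G'_k$ admits a $3r$-dominating set of size $k$. First I would set up the outer loop over all candidate values of $k$ from $1$ to $n$, and for each $k$ run this kernelization to obtain $G'_k$. A cheap filtering step rules out the cases $|V(G'_k)| < t$, since then no set can cover $t$ vertices and the same holds for $G$.

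For each surviving $k$, the core subroutine exploits the $3r$-dominating set guarantee. Since {\sc $r$-Dominating Set} (hence {\sc $3r$-Dominating Set}) is $\eta$-transversable by Lemma~\ref{lem:domprobs}, I obtain in polynomial time a set $X$ of size $O(k)$ with $\tw(G'_k \setminus X) \leq \eta$. Applying Lemma~\ref{lem:theLemma} with a parameter $\epsilon'$ to be fixed later, and using that apex-minor-free graphs have truly sublinear treewidth (Corollary~\ref{cor:minortreewidth}), I shrink this to a set $X'$ of size at most $\epsilon' \cdot O(k)$ so that $\tw(G'_k \setminus X') \leq \delta$ for a constant $\delta$ depending only on $\eta$ and $H$ (this is exactly the regime of Corollary~\ref{cor:theCorollary}). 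I then commit every vertex of $X'$ to the solution, record the vertices already covered by $X'$ (i.e. $B^r_{G'_k}(X')\setminus X'$) in a set $R$, and solve the residual problem on the bounded-treewidth graph $G'_k \setminus X'$ by standard dynamic programming (or via the extended MSO formulation of Arnborg et al.~\cite{ArnborgLS91}), finding a minimum $S' \subseteq V(G'_k)\setminus X'$ with $|X'| + |R \cup B^r_{G'_k \setminus X'}(S')| \geq t$. The subroutine returns $S' \cup X'$.

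The analysis has two halves. Feasibility is immediate: because $G'_k$ is an induced subgraph of $G$, any returned $S' \cup X'$ covers at least $t$ vertices in $G$ as well. For the approximation bound I focus on the iteration $k = OPT$, where $G'_k$ contains a set $Z$ of size $OPT$ covering $t$ vertices. The point is that $Z \setminus X'$ still covers at least $t - |B^r_{G'_k}(X')|$ of the uncovered portion $V(G'_k)\setminus B^r_{G'_k}(X')$ inside $G'_k \setminus X'$, so the dynamic program finds some $S'$ with $|S'| \leq |Z \setminus X'| \leq OPT$. The total returned in this iteration is therefore at most $OPT + |X'| \leq OPT(1 + \epsilon'\rho)$, and choosing $\epsilon' = \epsilon/\rho$ gives a $(1+\epsilon)$-approximation; returning the smallest solution over all $k$ is at least as good. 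The running time is polynomial in $n$ times $f(\delta)$, with $\delta$ depending only on $\epsilon$ and $H$, which is exactly the EPTAS form.

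The main obstacle I anticipate is not any single deep step but correctly threading the ``partial'' nature of the objective through the bounded-treewidth solve: the committed set $X'$ pre-covers some vertices for free, so the dynamic program must optimize a budgeted covering target $t - |R|$ rather than a global one, and the accounting showing $Z \setminus X'$ remains a good residual solution must be done carefully so the additive loss is confined to $|X'| = O(\epsilon \cdot OPT)$. The {\sc Partial Vertex Cover} case runs along the same lines, using the $2^{O(\sqrt{OPT})}$-style kernel of~\cite{FominLRS09} and treating covered edges in place of covered vertices; I would only remark on the required modifications rather than repeat the argument.
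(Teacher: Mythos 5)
Your proposal is correct and follows essentially the same route as the paper's own proof: the same outer loop over $k$ with the kernelization of Fomin et al.~\cite{FominLRS09}, the same use of $\eta$-transversability of {\sc $3r$-Dominating Set} combined with Lemma~\ref{lem:theLemma} to extract a small set $X'$ leaving bounded treewidth, the same bounded-treewidth dynamic program on the residual budgeted instance, and the identical accounting at the iteration $k=OPT$ with the choice $\epsilon'=\epsilon/\rho$. Nothing substantive differs, so no further comparison is needed.
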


Finally by applying Theorems~\ref{thm:eptastrans} and \ref{thm:minoreptas} together with Lemmata~\ref{lem:domprobs}, \ref{lem:maxleaf}, \ref{lem:minorpackcover}, \ref{lem:subgraphcover} and \ref{lem:partial} we get the following 
corollary. 
\begin{corollary}
\label{metacor1} 
{\sc Feedback Vertex Set},  {\sc Vertex Cover}, {\sc Connected Vertex Cover}, {\sc Cycle Packing}, {\sc Diamond Hitting Set}, 
{\sc Minimum-Vertex Feedback Edge Set}, {\sc Vertex-${\cal H}$-Packing}, 
{\sc Vertex-${\cal H}$-Covering}, {\sc Maximum Induced Forest}, {\sc Maximum Induced Bipartite Subgraph} and {\sc Maximum Induced Planar Subgraph}
admit an EPTAS on $H$-minor-free graphs.
{\sc Edge Dominating Set}, {\sc Dominating Set}, {\sc $r$-Dominating Set}, {\sc $q$-Threshold Dominating Set},  
{\sc Connected Dominating Set}, {\sc Directed Domination},  {\sc $r$-Scattered Set}, 
{\sc Minimum Maximal Matching}, {\sc Independent Set},  {\sc Maximum Full-Degree Spanning Tree}, 
{\sc Max Induced at most $d$-Degree Subgraph}, 
{\sc Max Internal Spanning Tree}, 
{\sc Induced Matching}, {\sc Triangle Packing}, {\sc Vertex-${\cal S}$-Covering},  
{\sc Vertex-${\cal S}$-Packing}  {\sc Partial $r$-Dominating Set} and  {\sc Partial Vertex Cover} admit an  EPTAS on apex-minor-free graphs.
\end{corollary}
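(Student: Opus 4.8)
The plan is to observe that this corollary is a bookkeeping statement: every problem on the two lists is shown to satisfy the hypotheses of one of the two master results, namely Theorem~\ref{thm:eptastrans} (\emph{$\eta$-transversable $+$ reducible}, giving an EPTAS on any class of truly sublinear treewidth) or Theorem~\ref{thm:minoreptas} (\emph{reducible $+$ bidimensional $+$ separation property}). The dichotomy between the two lists is governed by the flavour of bidimensionality: a minor-bidimensional, separable, reducible problem yields an EPTAS on $H$-minor-free graphs by the first parenthetical of Theorem~\ref{thm:minoreptas}, whereas a contraction-bidimensional one yields an EPTAS only on apex-minor-free graphs by the second parenthetical. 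A problem that is $\eta$-transversable on \emph{general} graphs (rather than merely on the excluded-minor class) escapes this restriction and lands in the $H$-minor-free list via Theorem~\ref{thm:eptastrans} together with Corollary~\ref{cor:minortreewidth}; this is exactly the route for {\sc Connected Vertex Cover}, which Lemma~\ref{lem:domprobs} shows is $0$-transversable on general graphs.

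First I would dispatch the problems that are literal instantiations of the generic covering and packing problems. By Lemma~\ref{lem:minorpackcover}, {\sc Vertex-${\cal H}$-Covering} and {\sc Vertex-${\cal H}$-Packing} are minor-bidimensional, separable and reducible, so Theorem~\ref{thm:minoreptas} applies to every specialization: {\sc Vertex Cover} (${\cal H}=\{K_2\}$), {\sc Feedback Vertex Set} (${\cal H}=\{K_3\}$), {\sc Diamond Hitting Set} (${\cal H}=\{K_4 \setminus e\}$), and {\sc Cycle Packing} (${\cal H}=\{K_3\}$ in the packing version). Likewise Lemma~\ref{lem:subgraphcover} covers {\sc Vertex-${\cal S}$-Covering} and {\sc Vertex-${\cal S}$-Packing} (after the Redundant Vertex Rule), which subsumes {\sc Triangle Packing} (${\cal S}=\{K_3\}$) and {\sc Induced Matching}. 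The maximization variants {\sc Maximum Induced Forest}, {\sc Maximum Induced Bipartite Subgraph} and {\sc Maximum Induced Planar Subgraph} are handled as the complementary covering problems of the same shape, and the partial problems {\sc Partial $r$-Dominating Set} and {\sc Partial Vertex Cover} are disposed of directly by Lemma~\ref{lem:partial}. For each of these I would only need to record the appropriate choice of ${\cal H}$ or ${\cal S}$ and invoke the cited lemma.

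Next I would handle the domination-type family via Lemma~\ref{lem:domprobs}, which establishes contraction-bidimensionality, separability and reducibility for {\sc $r$-Dominating Set}, {\sc Connected Dominating Set} and {\sc Connected Vertex Cover}; the cognate problems {\sc Dominating Set} ($r=1$), {\sc $q$-Threshold Dominating Set}, {\sc Edge Dominating Set}, {\sc Directed Domination}, {\sc $r$-Scattered Set}, {\sc Minimum Maximal Matching} and {\sc Independent Set} would be treated by the same three-step recipe, verifying contraction-bidimensionality (these are the classical examples from~\cite{DFHT05}), re-running the $(Z\setminus R)\cup S$ separation argument of that lemma, and exhibiting the annotated \pmm{} problem $\Pi'$ that witnesses reducibility. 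At each step I must confirm that $\Pi'$ is genuinely CMSO-expressible, so that the Borie et al.\ solver inside Theorem~\ref{thm:eptastrans} can run on $G'$ of bounded treewidth; this check is routine but must be made for every problem.

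The main obstacle is the cluster of problems that are neither minor- nor contraction-closed, so that bidimensionality cannot be cited off the shelf: {\sc Maximum Leaf Spanning Tree}, {\sc Maximum Full-Degree Spanning Tree}, {\sc Max Internal Spanning Tree}, {\sc Max Induced at most $d$-Degree Subgraph} and {\sc Minimum-Vertex Feedback Edge Set}. For these I would follow the template already used for {\sc Max Leaf Spanning Tree} in Lemma~\ref{lem:maxleaf}: establish $\eta$-transversability \emph{directly} from an extremal/structural bound (the Kleitman--West degree bound for the leaf problem, and analogous bounds pinning down the number of high-degree vertices for the other spanning-tree and degree-constrained problems), which yields a set $X$ of size $O(\pi(G))$ with $\tw(G\setminus X)\le \eta$, and then supply the bespoke annotated problem $\Pi'$ proving reducibility. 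Theorem~\ref{thm:eptastrans} then gives the EPTAS. The delicate points here are finding the correct extremal lemma for each problem and verifying that the reducibility gap is only $O(|X|)$ (for instance, reconnecting a spanning forest with $|X|$ components costs only $O(|X|)$ edges, as in Lemma~\ref{lem:maxleaf}); the rest is the uniform application of the two master theorems.
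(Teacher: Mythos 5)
Your proposal takes essentially the same approach as the paper: the paper's entire proof of this corollary is the single sentence that it follows ``by applying Theorems~\ref{thm:eptastrans} and~\ref{thm:minoreptas} together with Lemmata~\ref{lem:domprobs}, \ref{lem:maxleaf}, \ref{lem:minorpackcover}, \ref{lem:subgraphcover} and~\ref{lem:partial},'' which is exactly the bookkeeping you describe. Your routing of the individual problems to the two master theorems --- including the observation that {\sc Connected Vertex Cover} lands in the $H$-minor-free list because Lemma~\ref{lem:domprobs} gives $0$-transversability on general graphs, so Theorem~\ref{thm:eptastrans} applies there rather than the contraction-bidimensional route --- is consistent with, and in places more explicit than, what the paper itself records.
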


It should be noted that for a fixed $\epsilon$,  the treewidth $\tau$ in Corollary~\ref{cor:theCorollary} is $O(1/\epsilon)$. For $H$-minor-free graphs one can compute the set $X'$ from $G$ and $X$ using the procedure described in the last paragraph of the proof of Lemma~\ref{lem:theLemma} but using the constant factor approximation for treewidth on $H$-minor-free graphs~\cite{FeigeHajLee05} instead of the approximation algorithm for general graphs. For many problems discussed in this paper, the MSO-based algorithms on graphs of bounded treewidth could be replaced by standard dynamic programming algorithms with running time $2^{O(\tw(G))}n$ or $2^{O(\tw(G)\log (\tw(G)))}n$. On $H$-minor-free graphs this leads to EPTASs with running times on the form $2^{O(1/\epsilon)}n + n^{O(1)}$ or $2^{O(1/\epsilon \log (1/\epsilon))}n + n^{O(1)}$, which is comparable to the fastest previously known results.

\bibliographystyle{siam}
\bibliography{graphs,kernels,convcNK,complete}

\end{document}